\def\final{1}
	\DeclareMathAlphabet{\mathsf}{OT1}{cmss}{m}{n}
	\SetMathAlphabet{\mathsf}{bold}{OT1}{cmss}{bx}{n}
	\DeclareMathAlphabet{\mathtt}{OT1}{cmtt}{m}{n}
	\SetMathAlphabet{\mathtt}{bold}{OT1}{cmtt}{bx}{n}
	\definecolor{DarkGreen}{rgb}{0.15,0.5,0.15}
	\definecolor{DarkRed}{rgb}{0.6,0.2,0.2}
	\definecolor{DarkBlue}{rgb}{0.15,0.15,0.55}
	\definecolor{DarkPurple}{rgb}{0.4,0.2,0.4}
\newcolumntype{Y}{>{\centering\arraybackslash}X}
\newcommand{\mynote}[2]{{\color{#1} \marginpar{\tiny #2}}}
\newcommand{\mybignote}[2]{{\color{#1} $\langle \langle$ #2$\rangle \rangle$}}
\newcommand{\mynote}[2]{}
\newcommand{\mybignote}[2]{}
\newcommand{\jnote}[1]{\mynote{DarkRed}{Jon: {#1}}}
\newcommand{\tnote}[1]{\mynote{magenta}{Thomas: {#1}}}
\newcommand{\pr}[2]{\underset{#1}{\mathbb{P}}\left[ #2 \right]}
\newcommand{\prob}[1]{\mathbb{P}\left[ #1 \right]}
\newcommand{\ex}[2]{\underset{#1}{\mathbb{E}}\left[ #2 \right]}
\newcommand{\expe}[1]{\mathbb{E}\left[ #1 \right]}
\newcommand{\var}[2]{\underset{#1}{\mathrm{Var}}\left[ #2 \right]}
\newcommand{\eps}{\varepsilon}
\DeclareMathOperator*{\argmax}{arg\,max}
\newcommand{\E}{\mathbb{E}}
\newcommand{\N}{\mathbb{N}}
\newcommand{\R}{\mathbb{R}}
\newcommand{\cS}{\mathcal{S}}
\newtheorem{theorem}{Theorem}[section]
\newtheorem{lemma}[theorem]{Lemma}
\newtheorem{lem}[theorem]{Lemma}
\newtheorem{claim}[theorem]{Claim}
\newtheorem{prop}[theorem]{Proposition}
\theoremstyle{definition}
\newtheorem{definition}[theorem]{Definition}
\newcommand{\stab}{\eta}
\newcommand{\sel}{\cS}
\newcommand{\matx}{\mathbf{x}}
\newcommand{\vecx}{\tilde{\mathbf{x}}}
\newcommand{\matX}{\mathbf{X}}
\renewcommand{\t}{j}
\newcommand{\T}{m}
\title{Subgaussian Tail Bounds via Stability Arguments}
\author{Thomas Steinke\thanks{IBM, Almaden Research Center. \texttt{dp-tail@thomas-steinke.net}} \and Jonathan Ullman\thanks{Northeastern University. \texttt{jullman@ccs.neu.edu}}}
\begin{document}

\maketitle

%\pagenumbering{gobble}

\begin{abstract}
Sums of independent, bounded random variables concentrate around their expectation approximately as well a Gaussian of the same variance.  Well known results of this form include the Bernstein, Hoeffding, and Chernoff inequalities and many others.  We present an alternative proof of these tail bounds based on what we call a \emph{stability argument,} which avoids bounding the moment generating function or higher-order moments of the distribution.  Our stability argument is inspired by recent work on the generalization properties of differential privacy and their connection to adaptive data analysis (Bassily et al., STOC 2016).
\end{abstract}

%\vfill
%\newpage

%\pagenumbering{arabic}

\newcommand{\e}[1]{e^{#1}}

\section{Introduction}
Tail bounds, also known as large deviation inequalities, or concentration of measure theorems, are an essential tool in mathematics, computer science, and statistics.  The most common type of tail bounds (in theoretical computer science) show that the sum of independent, bounded random variables has subgaussian tails---the best known example being the following special case of Bernstein's Inequality.
\begin{theorem}[\cite{Bernstein24}]\label{thm:Hoeffding-intro} If $X_1, \cdots, X_n$ are independent random variables supported on $[0,1]$ and $\mu_i=\expe{X_i}$ for every $i$, then
\begin{equation*}
\forall \varepsilon \geq 0 ~~~~~ \prob{\sum_{i=1}^n X_i -\mu_i \geq \varepsilon n} \leq e^{-\Omega(\varepsilon^2n)}.\label{-eqn:Hoeffding-intro}
\end{equation*}
\end{theorem}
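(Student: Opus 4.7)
The plan is to replace the moment-generating-function step of the usual Chernoff/Bernstein proof with a stability argument from the differential-privacy-and-generalization literature. After centering, I may assume $Z_i := X_i - \mu_i \in [-1,1]$ with $\E Z_i = 0$, and the goal is to bound $p := \Pr[S \geq \eps n]$, where $S = \sum_i Z_i$. The core tool I would import from Bassily et al.\ is a stability-to-generalization lemma saying that if a randomized selector $\sel$ takes $(Z_1,\ldots,Z_n)$ and outputs an index $I \in [n]$ in an $\eps_0$-differentially private (or, more weakly, $\eps_0$-max-information-stable) way, then the ``selected'' mean $\E[Z_I]$ deviates from the ``resampled'' mean $\E[Z_I'] = 0$ by at most $O(\eps_0)$. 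This converts a stability parameter into an upper bound on $\E Z_I$ for free, and crucially never touches $\E[e^{\lambda S}]$.

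To exploit this I would use an exponential-mechanism-style selector: sample $I \in [n]$ with probability proportional to $\exp(\lambda Z_i)$. Because changing a single $Z_j$ changes at most one score (its own) by at most $2$, this selector is $2\lambda$-DP, so the stability lemma gives $\E[Z_I] \le O(\lambda)$. On the other hand, the usual guarantee of the exponential mechanism gives $\E[Z_I \mid Z_1,\dots,Z_n] \ge \bar Z - O(\log n / \lambda)$ where $\bar Z = S/n$, so taking expectations and splitting on the bad event yields $\E[Z_I] \ge (\eps - O(\log n / \lambda))\, p - (1-p)$. Setting $\lambda \asymp \eps n$ and rearranging should leave only the bad event contributing on the right-hand side, producing a nontrivial bound on $p$.

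The main obstacle I anticipate is calibrating the argument to obtain the subgaussian rate $e^{-\Omega(\eps^2 n)}$ rather than the merely exponential $e^{-\Omega(\eps n)}$ that a one-shot application will give. My guess is that the right fix is either (a) to apply the selector not to the individual $Z_i$ but to the aggregated statistic $S$ itself (e.g., via a Gaussian mechanism, whose $\eps_0$ scales like $\lambda / \sqrt{n}$ when added to a sum of sensitivity-$1$ contributions, turning $\lambda \cdot \eps$ in the lower bound into a quadratic-in-$\eps$ expression after optimization over $\lambda$), or (b) to iterate the stability argument $\sqrt{n}$-style, pushing $p$ through a recursive squaring step that amplifies a linear decay into a quadratic one in the exponent. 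Either way, the key technical point is arranging the sensitivity/privacy tradeoff so that balancing the upper bound $\E[Z_I] \lesssim \eps_0$ against the lower bound $\E[Z_I] \gtrsim \eps p$ forces $p \le e^{-\eps^2 n}$; once that calibration is in place the rest of the proof should be bookkeeping.
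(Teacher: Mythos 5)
Your overall instinct---that a differential-privacy/stability argument can replace the moment-generating function in the proof---is exactly right, and the paper indeed uses the exponential mechanism as its stable selector. But the specific object you propose to select over is wrong, and this is not a calibration issue you can fix by switching mechanisms or iterating: the argument as stated has a structural gap.

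You apply the exponential-mechanism selector to the $n$ \emph{individual} centered variables $Z_1,\dots,Z_n\in[-1,1]$ within a single realization, outputting an index $I\in[n]$. The trouble is that the quantity this controls, $\expe{Z_I}$, is tied to $\max_i Z_i$ (that is what the exponential mechanism's accuracy guarantee is about), and $\max_i Z_i$ is essentially decoupled from the tail event $\{\sum_i Z_i\geq \eps n\}$. Your lower bound $\expe{Z_I\mid Z}\geq \bar Z$ is true but weak: it holds already at $\lambda=0$, so splitting on the bad event only yields $\eps p-(1-p)\leq O(\lambda)$, which as $\lambda\to 0$ gives $p\leq 1/(1+\eps)$ --- no concentration at all, let alone a subgaussian rate. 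The calibration you anticipate as ``the main obstacle'' cannot be fixed by clever choice of $\lambda$ because the sensitivity/accuracy tradeoff for this selector is attached to a single bounded variable $Z_I$, not to the $n$-fold sum.

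The missing idea is the \emph{monitor}: introduce $m$ independent copies $Y^1,\dots,Y^m$ of the sum $Y=\sum_i X_i-\mu_i$, arrange them as the columns of an $n\times m$ matrix whose rows are the independent units, and run the exponential mechanism over the $m$ \emph{columns} with scores $\exp\bigl(\tfrac{\eta}{2}\sum_i x_i^j\bigr)$. Now changing one of the $n$ independent rows moves every column sum by at most $1$, so the selector is $\eta$-stable relative to a scale-$n$ quantity; accuracy gives $\expe{\max_j Y^j}\leq \expe{Y^{\sel}}+2\ln m/\eta$; and the transfer/hybrid argument bounds $\expe{\sum_i X_i^{\sel}}\leq e^\eta\mu$, which after recentering contributes only $(e^\eta-1)\mu\lesssim \eta n$. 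Optimizing $\eta\sim\sqrt{\ln m/n}$ yields the proxy bound $\expe{\max\{0,Y^1,\dots,Y^m\}}=O(\sqrt{n\ln m})$, and a short Markov argument (the paper's Lemma~\ref{lem:MaxTB-intro}) turns that, with $m\sim e^{\Theta(\eps^2 n)}$, into $\prob{Y\geq\eps n}\leq e^{-\Omega(\eps^2 n)}$. Your option~(a) gestures toward ``applying the selector to $S$ itself,'' but a single noisy copy of $S$ is not enough: you need exponentially many independent copies so that the proxy $\expe{\max\{0,Y^1,\dots,Y^m\}}$ can stand in for the tail probability. That two-level structure --- reduce a tail bound to an expected-max proxy over $m$ copies, then bound the proxy by a stable selector whose $\eta$-stability is small relative to the sum's scale --- is what your proposal is missing.
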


Theorem \ref{thm:Hoeffding-intro} has many generalizations and extensions: tight constants~\cite{Hoeffding63}, random variables with differing ranges~\cite{Hoeffding63}, multiplicative error bounds~\cite{Chernoff52}, variables with low variance \cite{Bernstein24}, variables with only limited independence \cite{SchmidtSS95}, Martingales \cite{Azuma67}, matrix-valued random variables \cite{AhlswedeW02}, and many more.

We present an alternative proof of Theorem \ref{thm:Hoeffding-intro}, which differs significantly from all other proofs the authors are aware of.  Our proof is derived from recent work on algorithmic stability for adaptive data analysis~\cite{BassilyNSSSU16}.  At a high level, our proof shows that Theorem \ref{thm:Hoeffding-intro} follows from an appropriate bound on the expectation of the maximum of $\T$ independent realizations of $\sum_{i=1}^n X_i-\mu_i$. The required bound on the expected maximum follows from the existence of a \emph{stable} randomized algorithm for selecting an approximate maximizer from among the $\T$ realizations. Here stability means that the output distribution of the algorithm is insensitive to small changes in these values, which is formalized by differential privacy~\cite{DworkMNS06}.

The alternative proof was (in hindsight) a key ingredient in obtaining tight bounds on the generalization properties of differential privacy~\cite{BassilyNSSSU16}, which can be viewed as a strong extension of Theorem~\ref{thm:Hoeffding-intro}.  Thus we believe our approach both deepens our understanding of concentration inequalities and may find further applications.

\subsection{Our Approach}
Since reasoning directly about the distribution of $\sum_{i=1}^{n} X_i - \mu_i$ is difficult, proofs of Theorem~\ref{thm:Hoeffding-intro} typically work by bounding the expectation of some random variable that serves as a ``proxy'' for the tail probability $\prob{\sum_{i=1}^{n} X_i - \mu_i \geq \eps n}$.  Our proof uses a very different proxy than previous proofs. (See Section~\ref{sec:comparison} for a comparison to other proofs.)  Specifically, to bound $\prob{Y \geq y}$ we bound the quantity $\expe{\max\{0, Y^1, Y^2, \cdots, Y^\T\}}$, where $Y^1, \cdots, Y^\T$ are independent copies of the original random variable $Y$.\footnote{Our proof also works with the proxy $\expe{\max\{|Y^1|, |Y^2|, \cdots, |Y^\T|\}}$, which naturally yields two-sided tail bounds.}  Bounding this proxy implies a tail bound via the following lemma.
\begin{lem}\label{lem:MaxTB-intro}
Let $Y$ be a random variable and let $Y^1, Y^2, \cdots, Y^\T$ be independent copies of $Y$. Then $$\prob{Y \geq 2 \expe{\max\left\{0,Y^1,\cdots,Y^\T\right\}}} \leq \frac{\ln(2)}{\T}.$$
\end{lem}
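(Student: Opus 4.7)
The plan is to leverage the independence of the copies $Y^1, \ldots, Y^m$ to derive a self-consistent inequality that the expected maximum $y := \expe{\max\{0, Y^1, \ldots, Y^m\}}$ must satisfy, in which the tail probability $p := \prob{Y \geq 2y}$ appears only as a parameter, and then solve it for $p$.

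First I would apply a Markov-style lower bound: writing $M := \max\{0, Y^1, \ldots, Y^m\}$, I would use that $M \geq 0$ and $2y \geq 0$ to get $\expe{M} \geq 2y \cdot \prob{M \geq 2y}$. Next, by independence of $Y^1, \ldots, Y^m$ and by the definition of $p$, the probability $\prob{M \geq 2y}$ equals $1 - (1-p)^m$ (the constant $0$ inside the max is irrelevant as long as the threshold is strictly positive). Combining these two observations yields the self-consistent bound $y \geq 2y\bigl(1 - (1-p)^m\bigr)$.

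Assuming $y > 0$, I can divide by $2y$ to obtain $(1-p)^m \geq 1/2$, and then the standard inequality $\ln(1-p) \leq -p$ gives $-\ln 2 \leq m \ln(1-p) \leq -mp$, so $p \leq \ln(2)/m$. The degenerate case $y = 0$ forces $\max\{0, Y^1\} = 0$ almost surely, hence each $Y^i \leq 0$ a.s.\ and the tail event at the threshold $2y = 0$ collapses to an atom at $0$; this can be handled separately or dismissed as essentially vacuous depending on the chosen convention for the boundary. I do not anticipate a real obstacle: the whole argument is a few lines of elementary algebra, and the only slightly subtle point is recognizing that applying Markov to $M$ at the threshold $2y$ is precisely what produces the desired feedback loop between $y$ and $p$.
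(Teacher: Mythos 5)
Your proof is correct and is essentially the same as the paper's, just unrolled: both apply Markov to $M = \max\{0,Y^1,\dots,Y^\T\}$ at the threshold $2\expe{M}$ to get $\prob{M \geq 2\expe{M}} \leq 1/2$, then use independence to write this probability as $1-(1-p)^\T$ and the inequality $1-p \leq e^{-p}$ to extract $p \leq \ln(2)/\T$. The only cosmetic difference is that the paper packages the last step as a contradiction (``if $\delta > \ln(2)/\T$ then $\prob{M \geq y} > 1/2$''), while you solve the resulting inequality for $p$ directly; the degenerate case $\expe{M}=0$, which you flag, is likewise glossed over in the paper.
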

\begin{proof}
Let $y= 2 \expe{\max\{0,Y^1,\cdots,Y^\T\}}$ and $\delta = \prob{Y \geq y}$. By Markov's inequality,\footnote{Markov's inequality states that, for a non-negative random variable $X$ and a constant $x>0$, $\prob{X \geq x} \leq \expe{X}/x$.} $$\prob{\max\{0, Y^1, Y^2, \cdots, Y^\T\} \geq y} \leq \frac12.$$ However, if $\delta > \ln(2)/\T$, then 
\begin{align*}
\prob{\max\{0, Y^1, Y^2, \cdots, Y^\T\}\geq y} &= 1 - \prob{\forall j \in [m]~~~Y^j < y}\\
&=1-\prob{Y < y}^\T= 1-(1-\delta)^\T \\
&> 1-e^{-\delta \T} > 1-e^{-\ln(2)} = 1/2,
\end{align*}
which is a contradiction. Thus $\delta \leq \ln(2)/\T$, as required.
%Let $y= 2 \expe{\max\{0,Y^1,\cdots,Y^T\}}$ and $\delta = \prob{Y \geq y}$. Suppose, for the sake of contradiction, that $\delta > 1/T$. Then $$\prob{\max\{0, Y^1, Y^2, \cdots, Y^T\}\geq y} = 1-(1-\delta)^T \geq 1-e^{-\delta T} > 1-e^{-1} > \frac{1}{2}.$$ Hence, $$\expe{\max\{0, Y^1, Y^2, \cdots, Y^T\}} \geq y \cdot \prob{\max\{0, Y^1, Y^2, \cdots, Y^T\}\geq y} > \frac{y}{2} = \expe{\max\{0,Y^1,\cdots,Y^T\}}$$---a contradiction. %We conclude that \begin{equation}\prob{Y \geq 2 \expe{\max\{0,Y_1,\cdots,Y_T\}}} < \frac{1}{T}.\label{eqn:MaxTB-intro}
\end{proof}

Thus, the key technical step in our proof is to establish the following proxy bound.
\begin{prop}\label{prop:Max-intro}
Let $X_1, \cdots, X_n$ be independent random variables supported on $[0,1]$ and $\mu_i=\expe{X_i}$ for each $i$. Define $Y = \sum_{i=1}^n X_i - \mu_i$. Then for every $\T \in \N$, if $Y^1, \cdots, Y^\T$ are independent copies of $Y$,
$$
\expe{\max\left\{ 0, Y^1, \cdots, Y^\T \right\}} \leq O(\sqrt{n \cdot \ln \T} ).
$$
\end{prop}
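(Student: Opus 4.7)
The plan is to prove Proposition \ref{prop:Max-intro} by a differential-privacy style stability argument. Arrange the $n\T$ random variables into an $n \times \T$ array $\matx = (X_i^j)$ and augment the column index set with a dummy symbol $\perp$ carrying score $Y^\perp = 0$, so that the quantity to bound is $\expe{\max_{j \in \{\perp,1,\ldots,\T\}} Y^j}$. I would apply the exponential mechanism $M$ of McSherry--Talwar on the set $\{\perp, 1, \ldots, \T\}$ with score function $j \mapsto Y^j$ and privacy parameter $\eps > 0$ to be chosen later, yielding a random index $j^* = M(\matx)$. Since each entry $X_i^j$ appears in exactly one score $Y^j$ and with coefficient $1$, the score function has sensitivity $1$ with respect to each entry of $\matx$, and so $M$ is $2\eps$-differentially private with respect to each entry.

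Two facts about $M$ drive the argument. \emph{Utility}: by the standard analysis of the exponential mechanism,
\[
\expe{\max\{0, Y^1, \ldots, Y^\T\}} \leq \expe{Y^{j^*}} + O\!\left( \frac{\ln(\T+1)}{\eps} \right).
\]
\emph{Stability}: I claim $\expe{Y^{j^*}} \leq (e^{2\eps} - 1) n = O(\eps n)$. Adding these two inequalities and optimizing $\eps = \sqrt{\ln \T / n}$ yields the desired $O(\sqrt{n \ln \T})$ bound.

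The main technical step is the stability bound. Fix $i \in [n]$ and $j \in [\T]$, and condition on the remaining entries $\matx_{-(i,j)}$. By $2\eps$-differential privacy with respect to $X_i^j$,
\[
\prob{j^* = j \mid \matx_{-(i,j)}, X_i^j = x} \leq e^{2\eps}\, \prob{j^* = j \mid \matx_{-(i,j)}, X_i^j = x'}
\]
for all $x, x' \in [0,1]$. Averaging the right-hand side against $x'$ drawn from the marginal distribution of $X_i^j$, which is independent of $\matx_{-(i,j)}$, gives $\prob{j^* = j \mid \matx_{-(i,j)}, X_i^j = x} \leq e^{2\eps}\, \prob{j^* = j \mid \matx_{-(i,j)}}$. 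Multiplying by $x$, taking expectation over $X_i^j$, and then over $\matx_{-(i,j)}$ yields $\expe{X_i^j \, \mathbf{1}[j^* = j]} \leq e^{2\eps}\, \mu_i\, \prob{j^* = j}$. Subtracting $\mu_i \prob{j^* = j}$, summing over $j \in [\T]$, summing over $i \in [n]$, and using that $Y^{j^*}$ vanishes on the event $j^* = \perp$ then gives $\expe{Y^{j^*}} \leq (e^{2\eps} - 1) \sum_i \mu_i \leq (e^{2\eps} - 1) n$, as claimed.

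The main obstacle is the stability step: it would not suffice to invoke $M$'s privacy at the level of a coarser ``dataset,'' because that would produce an upper bound of the form $e^{2\eps}$ rather than $e^{2\eps} \mu_i$ and would not sum to anything useful. The key move is to single out each of the $n\T$ independent entries of $\matx$ in turn, apply $2\eps$-DP against that single entry, and average the resulting pointwise inequality against that entry's marginal distribution, which is what produces the $\mu_i$ factor and ultimately the linear-in-$\eps n$ estimate. The exponential mechanism's utility and privacy guarantees may then be quoted as black boxes from the differential-privacy literature.
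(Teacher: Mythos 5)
Your proposal is correct and mirrors the paper's proof: both introduce a dummy zero-score option, select a near-maximizing index via the exponential mechanism, combine its utility guarantee with a differential-privacy-based stability bound on $\expe{Y^{j^*}}$, and then optimize the privacy parameter $\eps \approx \sqrt{\ln \T / n}$. The only (inessential) difference is that you run the stability/swap argument one \emph{entry} $X_i^j$ at a time, averaging against its marginal, while the paper swaps an entire \emph{row} $\matX_i$ with an independent copy $\tilde\matX_i$ --- the paper's version needs only the rows of the matrix to be independent, but for this proposition the entries are all independent anyway, so both decompositions give the same bound.
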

Combining Proposition \ref{prop:Max-intro} with Lemma \ref{lem:MaxTB-intro} and setting $
\T=\e{\Theta(\varepsilon^2 n)}$ yields Theorem \ref{thm:Hoeffding-intro}. We remark that Proposition \ref{prop:Max-intro} and Theorem \ref{thm:Hoeffding-intro} are equivalent,\footnote{More precisely, for \emph{any} random variable $X$, the condition $\forall x \geq 0 ~~ \prob{X \geq x} \leq e^{-\Omega(x^2)}$ is equivalent to the condition $\forall m ~~ \ex{}{\max\{0,X^1, \cdots, X^m\}} \leq O(\ln m)$ (as well as other equivalent characterizations). Such a random variable is called \emph{subgaussian} \cite{Rivasplata12}.}  because the proxy bound can be derived by integrating the tail bound of  Theorem \ref{thm:Hoeffding-intro} combined with a union bound.

Our proof of Proposition \ref{prop:Max-intro} is essentially a special case of the work of~\cite{BassilyNSSSU16} on algorithmic stability, which proves concentration inequalities in a more general ``adaptive'' setting.  

\subsubsection*{Bounding the Proxy via Stability}

Define $Y^{\T+1}=0$ so that $$\max\left\{ 0, Y^1, \cdots, Y^\T \right\} = \max\left\{ Y^1, \cdots, Y^\T, Y^{\T+1} \right\} = Y^{\argmax_{\t \in [\T+1]} Y^\t}.$$  The function $f(Y^1, \cdots, Y^{\T+1}) = \argmax_{\t \in [\T+1]} Y^\t$ is extremely \emph{unstable} in the sense that a small change in the inputs $Y^1, \cdots, Y^{\T+1}$ could change the output arbitrarily.  In contrast, a constant function is perfectly \emph{stable}, since it doesn't depend on its input at all.  Our proof relies on the existence of an intermediate function that approximates the $\argmax$ function but provides some stability.  We introduce a parameter $\stab > 0$ that parameterizes the degree of stability, and define a procedure $\sel_{\stab}$, and analyze the quantity $\expe{Y^{\sel_\stab(Y^1, \cdots, Y^{\T+1})}}$. 

Smaller values of $\eta$ imply a higher degree of stability.  Specifically, as $\eta \rightarrow 0$ then $\sel_{\stab}$ is independent of its input, so $\expe{Y^{\sel_\stab(Y^1, \cdots, Y^{\T+1})}} \rightarrow 0$, since $\expe{Y}=0$. On the other hand, as $\eta \to \infty$, the approximation becomes arbitrarily accurate and we have 
%$$
%\prob{Y^{\sel_\stab(Y^1, \cdots, Y^{\T+1})} = Y^{\argmax_{\t \in [\T+1]} Y^\t}} \to 1,
%$$
%which implies
$$
\expe{Y^{\sel_\stab(Y^1, \cdots, Y^{\T+1})}} \to \expe{\max\left\{0, Y^1, \cdots, Y^{\T} \right\}}.
$$
All that remains is to quantify the tradeoff between stability and accuracy, and find a suitable value of $\stab > 0$ that allows us to relate $\expe{Y}$ to $\mathbb{E}[\max\{0, Y^1, \cdots, Y^{\T}\}]$.  The full proof is contained in Section~\ref{sec:Proof}.
%More precisely, we use stability to argue that $\left| \expe{Y^{\sel_\stab(Y^1, \cdots, Y^{\T+1})}} - 0\right| \leq O(\stab n)$ and we use accuracy of the approximation to argue that $\left| \expe{Y^{\sel_\stab(Y^1, \cdots, Y^{\T+1})}}-\expe{\max\left\{ 0,Y^1, \cdots, Y^{\T} \right\}} \right| \leq O(\ln(\T+1)/\stab)$. Combining these two inequalities with an appropriate choice of $\stab = \Theta(\sqrt{\ln(\T+1)/n})$ yields Proposition \ref{prop:Max-intro}.

\medskip

Subsequent to this work, Nissim and Stemmer \cite{NissimS17} used the same techniques as us to prove novel concentration inequalities which are applicable to more ``heavy-tailed'' distributions. Specifically, they extend McDiarmid's inequality (see \S\ref{sec:mcdiarmid}) to functions with high worst-case sensitivity but low ``average-case'' sensitivity.

\subsection{Comparison to Previous Proofs} \label{sec:comparison}

For context, we give some comparison between our approach and the previous proofs of concentration inequalities.  Previous proofs use different proxies for the tails $\prob{\sum_{i=1}^{n} X_i - \mu_i \geq \eps n}$.  The most common proxy is the \emph{moment generating function} $f(\lambda) := \expe{\e{\lambda\sum_{i=1}^n X_i - \mu_i}}$.  If $X_1, \cdots, X_n$ are independent random variables ssupported on $[0,1]$ and $\mu_i=\expe{X_i}$ for each $i$, then 
\begin{equation}
\forall \lambda \in \R~~~~~\expe{\e{\lambda \sum_{i=1}^n X_i - \mu_i}} \leq \e{O(\lambda^2 n)}.
\label{eqn:MGF}\end{equation}
By Markov's inequality, \eqref{eqn:MGF} implies the tail bound
\begin{equation} 
\forall \lambda \in \R~~~~~\prob{\sum_{i=1}^n X_i -\mu_i \geq \varepsilon n} 
\leq \frac{\expe{\e{\lambda\left(\sum_{i=1}^n X_i - \mu_i\right)}}}{\e{\lambda \varepsilon n}} 
\leq \e{O(\lambda^2 n) - \lambda \varepsilon n}.\label{eqn:Markov-intro}
\end{equation}
Setting $\lambda=\Theta(\varepsilon)$ appropriately in \eqref{eqn:Markov-intro} yields Theorem \ref{thm:Hoeffding-intro}. The reverse is also true---integrating the tail bound of Theorem \ref{thm:Hoeffding-intro} yields the bound on the moment generating function \eqref{eqn:MGF}. Thus Theorem \ref{thm:Hoeffding-intro} is equivalent to the proxy bound \eqref{eqn:MGF}.
The moment generating function is particularly easy to work with, as independence can be exploited to ``factor'' the moment generating function: $\expe{\e{\lambda\sum_{i=1}^n X_i - \mu_i}} = \prod_{i=1}^n \expe{\e{\lambda\left(X_i - \mu_i\right)}}.$ Thus proving \eqref{eqn:MGF} reduces to analysing a single $X_i-\mu_i$ random variable.

The \emph{moments} $m(k) := \mathbb{E}[( \sum_{i=1}^n X_i - \mu_i )^k]$ are another common proxy. If $X_1, \cdots, X_n$ are independent random variables supported on $[0,1]$ and $\mu_i=\expe{X_i}$ for each $i$, then
\begin{equation}
\forall k \in \N~~~~~\expe{\left( \sum_{i=1}^n X_i - \mu_i \right)^{2k}} \leq O\left( nk \right)^k.
\label{eqn:Moments}\end{equation}
%\end{prop}  
%Theorem \ref{thm:Hoeffding-intro} can be derived by applying Markov's inequality to Proposition \ref{prop:Moment-intro} with an appropriate setting  of $k=\Omega(\varepsilon^2 n)$: 
%$$
%\prob{\left| \sum_{i=1}^n X_i -\mu_i \right| \geq \varepsilon n} 
%= \prob{\left(\sum_{i=1}^n X_i -\mu_i\right)^{2k} \geq (\varepsilon n)^{2k}} 
%\leq \frac{ \expe{\left( \sum_{i=1}^n X_i - \mu_i \right)^{2k}}}{(\varepsilon n)^{2k}} 
%\leq \left( \frac{O(nk)}{(\varepsilon n)^2} \right)^k.
%$$
Theorem \ref{thm:Hoeffding-intro} follows from \eqref{eqn:Moments} by applying Markov's inequality with an appropriate choice of $k = \Theta(\eps^2 n)$.
Again, the moment bound \eqref{eqn:Moments} can be obtained by integrating (a two-sided version of) the tail bound in Theorem \ref{thm:Hoeffding-intro}, so moment bounds are equivalent to tail bounds. 
An immediate benefit of using moment bounds is that the independence assumption in Theorem \ref{thm:Hoeffding-intro} can be relaxed to \emph{limited independence} \cite{SchmidtSS95,BunS15}.
%An immediate benefit of using moment bounds is that the same bound applies when the variables $X_1, \cdots, X_n$ only satisfy limited independence, rather than full independence. That is, if $X_{i_1}, X_{i_2}, \cdots, X_{i_{2k}}$ are independent for all choices of $1 \leq i_1 \leq i_2 \leq \cdots \leq i_{2k} \leq n$ (but the entire collection $X_1, X_2, \cdots, X_n$ may not be independent), then the conclusion of Proposition \ref{prop:Moment-intro} and, hence, of Theorem \ref{thm:Hoeffding-intro} still hold.

Impagliazzo and Kabanets \cite{ImpagliazzoK10} use as a proxy $\expe{\prod_{i \in I} X_i}$ where $I \subseteq \{1, 2, \cdots, n\}$ is a random set of indices. Other proofs of Theorem \ref{thm:Hoeffding-intro} tend to resort to a direct analysis of the binomial distribution.  We refer the reader to \cite{MulzerChernoffNotes} for a compendium of proofs.

%The most common way to prove \eqref{eqn:Hoeffding-intro} is to find some function $f : \mathbb{R} \to [0,\infty)$ such that $x > \varepsilon n \implies f(x)>f(\varepsilon n)$ and then apply Markov's inequality: $$\prob{\sum_{i=1}^n X_i -\mu_i > \varepsilon n} \leq \prob{f\left(\sum_{i=1}^n X_i -\mu_i\right) > f(\varepsilon n)} \leq \frac{\expe{f\left(\sum_{i=1}^n X_i - \mu_i\right)}}{f(\varepsilon n)}.$$
%This reduces the problem to bounding $\expe{f\left(\sum_{i=1}^n X_i - \mu_i\right)}$. Indeed, \eqref{eqn:Hoeffding-intro} can be derived by bounding the moment generating function of $\sum_{i=1}^n X_i - \mu_i$ (corresponding to $f(x)=\e{\lambda x}$), namely \eqref{eqn:Hoeffding-intro} is implied by \begin{equation}\forall \lambda > 0 ~~~~~ \expe{\e{\lambda \sum_{i=1}^n X_i - \mu_i}} \leq \e{O(\lambda^2)}.\label{eqn:MGF-intro}\end{equation}
%Likewise, \eqref{eqn:Hoeffding-intro} can be derived from moment bounds (corresponding to $f(x) = x^{2k}$), namely \eqref{eqn:Hoeffding-intro} is implied by \begin{equation}\forall k \in \mathbb{N} ~~~~~ \expe{\left( \sum_{i=1}^n X_i - \mu_i \right)^{2k}} \leq O\left( nk \right)^k.\label{eqn:Moment-intro}\end{equation}
%In fact, \eqref{eqn:Hoeffding-intro} is essentially equivalent to \eqref{eqn:MGF-intro} and \eqref{eqn:Moment-intro} in the sense that a tail bound of the form \eqref{eqn:Hoeffding-intro} can be integrated to obtain bounds on the moments or moment generating function \cite{Subgaussians}.

\subsubsection*{The Advantages of Different Proxies}

The proxy $\mathbb{E}[\max\{ 0, Y^1, \cdots, Y^\T \}]$ is a very natural quantity to bound. In many applications, tail bounds are combined with a union bound because the  real quantity of interest is of the form $\max\{ Y^1, \cdots, Y^\T \}$.\footnote{Proposition \ref{prop:Max-intro} also holds if the random variables $Y^1, \cdots, Y^\T$ are not independent, like the union bound.}

The advantage of our proxy $\mathbb{E}[\max\{ 0, Y^1, \cdots, Y^\T \}]$ is that it less sensitive to heavy tails than either moments or the moment generating function. Namely, suppose that, with probability $\delta$, the random variable $Y$ takes on a large value $\Delta$. Then $\mathbb{E}[\max\{ 0, Y^1, \cdots, Y^\T \}]$ grows linearly with $\Delta \to \infty$ as $O(m\delta\Delta)$, whereas $\E[Y^{2k}]$ grows polynomially as $\delta \Delta^{2k}$ and $\E[e^{\lambda Y}]$ grows exponentially as $O(\delta e^{\lambda \Delta})$. Note that $\prob{Y \geq y}$ does not grow at all as $\Delta$ increases beyond $y$.

For Theorem \ref{thm:Hoeffding-intro} the tails are thin enough that all three proxies yield the same result. However, in the application of Bassily et al.~\cite{BassilyNSSSU16} to adaptive data analysis, there is a $\delta$ tail event as described above. Thus, using the proxy $\mathbb{E}[\max\{ 0, Y^1, \cdots, Y^\T \}]$ yields tighter bounds than what is achievable using moments. There may be other applications for which our proxy is more suitable than either moments or the moment generating function.

We also note that our proxy can be bounded in terms of the moment generating function: By Jensen's inequality and the bound $\max\{1, e^{\lambda Y^1}, \cdots, e^{\lambda Y^\T}\} \leq 1 + e^{\lambda Y^1} + \cdots + e^{\lambda Y^\T}$ we can show that
\begin{equation}
\forall \lambda > 0~~~~~\expe{\max\left\{ 0, Y^1, \cdots, Y^\T \right\}} \leq \frac{1}{\lambda} \log \expe{e^{\lambda \cdot \max\left\{ 0, Y^1, \cdots, Y^\T \right\}}} \leq \frac{1}{\lambda} \log \left( 1 + \T \cdot \expe{e^{\lambda Y}} \right).
\end{equation}
Thus Proposition \ref{prop:Max-intro} can be derived from the moment generating function bound \eqref{eqn:MGF} by setting $\lambda=\Theta\left(\sqrt{\ln (m+1)/n}\right)$ appropriately. Likewise, Proposition \ref{prop:Max-intro} can be derived from the moment bound \eqref{eqn:Moments} via 
\begin{equation}
\forall k \in \mathbb{N} ~~~~~ \expe{\max\left\{|Y^1|, \cdots, |Y^\T|\right\}} \leq \ex{}{\max\{(Y^1)^{2k}, \cdots, (Y^\T)^{2k}\}}^{1/2k} \leq \left( m \cdot \expe{Y^{2k}} \right)^{1/2k}.
\end{equation}

\section{Proof of the Proxy Bound}\label{sec:Proof}\tnote{better name needed} \jnote{Agreed.  This one at least has words specific to our paper :)}
Let $\matX$ be a random $n \times \T$ matrix with entries supported on $[0,1]$.  Let $X_i^\t$ denote the random variable corresponding to the entry in the $i^\text{th}$ row and $\t^\text{th}$ column of $\matX$.\footnote{We choose the subscript-superscript notation $\matX_i^\t$, rather than the more common double-subscript notation $\matX_{i,\t}$, because the rows and columns of $\matX$ play markedly different roles that we want to emphasize.}  As a convention we will use lower-case letters $\matx$ and $\matx_{i}$ and $x^{\t}_{i}$ to denote realizations of the matrix, rows of the matrix, and entries of the matrix, respectively, and we will use upper case letters to denote the corresponding random variables.

%We now state and prove our key lemma.  The lemma states that if $\matX$ has independent rows, then for the column sums of $\matX$, the expected maximum value is not too much larger than the maximum expected value.
\begin{lemma}[Main Lemma] \label{lem:main1}
If $\matX$ is a random $n \times \T$ matrix with entries supported on $[0,1]$ and independent rows, then
$$
\forall \stab > 0 ~~~~~ \expe{\max_{\t \in [\T]} \sum_{i=1}^{n} X^\t_i} \leq e^{\stab} \max_{\t \in [\T]} \expe{\sum_{i=1}^{n} X^\t_i} + \frac{2 \ln(\T)}{\stab}.
$$
\end{lemma}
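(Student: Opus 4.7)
The plan is to introduce a randomized selection mechanism $\sel_\stab$ that, on input $\matx$, approximately picks the column index $\t$ maximizing $q_\t(\matx) := \sum_{i=1}^n x_i^\t$, but does so in a \emph{stable} (differentially private) way with respect to the rows of $\matx$. The natural choice is the exponential mechanism at parameter $\stab/2$: $\prob{\sel_\stab(\matx) = \t} \propto \exp\!\big((\stab/2)\, q_\t(\matx)\big)$. The argument then splits into a \emph{utility} bound, showing $\sel_\stab$ almost achieves $\max_\t q_\t(\matx)$, and a \emph{stability} bound, showing the expected selected column sum is at most $e^\stab \max_\t \mu_\t$ where $\mu_\t := \expe{q_\t(\matX)}$; chaining these gives the lemma.

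For the utility bound I would use the standard convexity argument for the exponential mechanism. Let $g(\alpha) := \ln \sum_{\t} e^{\alpha q_\t(\matx)}$, which is convex in $\alpha$, satisfies $g(0) = \ln \T$, and has derivative $g'(\alpha) = \expe{q_{\sel}(\matx)}$ for the mechanism run at parameter $\alpha$. The tangent-line inequality $g(0) \geq g(\alpha) - \alpha g'(\alpha)$, combined with $g(\alpha) \geq \alpha \max_\t q_\t(\matx)$, yields $\expe{q_{\sel}(\matx)} \geq \max_\t q_\t(\matx) - \ln(\T)/\alpha$. Setting $\alpha = \stab/2$ gives
\[
\expe{q_{\sel_\stab(\matx)}(\matx)} \;\geq\; \max_\t q_\t(\matx) - \frac{2\ln \T}{\stab}.
\]

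For the stability bound the key observation is that each $q_\t$ has sensitivity $1$ under single-row changes (since entries lie in $[0,1]$), so the mechanism at parameter $\stab/2$ is $\stab$-differentially private: for $\matx,\matx'$ differing in a single row, $\prob{\sel_\stab(\matx)=\t} \leq e^\stab \prob{\sel_\stab(\matx')=\t}$ for every $\t$. To exploit this under the row-independence of $\matX$, introduce an independent copy $\matX'$ of $\matX$, and for each $i$ let $\matX^{(i)}$ be $\matX$ with its $i$-th row replaced by the $i$-th row of $\matX'$. Then $\matX^{(i)} \stackrel{d}{=} \matX$ and $\matX^{(i)}$ is independent of the $i$-th row $R_i$ of $\matX$. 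Applying the pointwise DP bound row-by-row, together with nonnegativity of the entries,
\[
\expe{X_i^{\sel_\stab(\matX)}} \;\leq\; e^\stab \expe{X_i^{\sel_\stab(\matX^{(i)})}} \;=\; e^\stab \sum_\t \expe{X_i^\t}\, \prob{\sel_\stab(\matX)=\t},
\]
where the equality uses independence of $R_i$ and $\matX^{(i)}$ together with $\matX^{(i)} \stackrel{d}{=} \matX$. Summing over $i$ bounds $\expe{q_{\sel_\stab(\matX)}(\matX)}$ by $e^\stab \sum_\t \mu_\t \prob{\sel_\stab(\matX)=\t} \leq e^\stab \max_\t \mu_\t$, and combining with the utility bound (after taking expectation over $\matX$) gives the lemma.

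The main obstacle is this last coupling argument: a direct application of DP, $\expe{q_{\sel_\stab(\matX)}(\matX)} \leq e^\stab \expe{q_{\sel_\stab(\matX)}(\matX)}$, is trivial and useless. One must instead engineer independence between the selector's input and the particular row being averaged, which is the purpose of the per-row swap $\matX \to \matX^{(i)}$. Two features of the setup are essential for this to work cleanly: the independence of the rows of $\matX$ (to ensure $\matX^{(i)} \stackrel{d}{=} \matX$) and the nonnegativity of the entries (so the multiplicative DP factor $e^\stab$ passes through without any additive correction).
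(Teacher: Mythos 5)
Your proposal is correct and takes essentially the same approach as the paper: the exponential mechanism at parameter $\stab/2$, a utility (accuracy) bound, and a row-swap coupling that exploits stability together with independence of rows and nonnegativity of entries. The only cosmetic difference is in the utility step, where you use the convexity/tangent-line argument for $g(\alpha) = \ln\sum_\t e^{\alpha q_\t}$ while the paper writes the same bound via the Shannon entropy $\mathsf{H}[\sel_\stab(\matx)] \leq \ln \T$; these are the same inequality in disguise, since $g(\alpha) - \alpha g'(\alpha)$ is precisely that entropy.
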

Before delving into the proof, some remarks are in order. Lemma \ref{lem:main1} states that, under certain independence and boundedness conditions, we can ``switch'' the expectation and the maximum of a collection of random sums at the price of a multiplicative $e^{\stab}$ factor and an additive $2\ln(\T)/\stab$ factor, for any choice of $\stab > 0$. 

In our application, we will start with some independent bounded random variables $(X_1,\dots,X_n)$.  Each column $\t$ of $\matX$ will an independent copy of these random variables, $(X_1^\t,\dots,X_n^\t)$.  In this case the columns of $\matX$ are also independent. % and have the same expectation and the column-sums $\sum_{i=1}^{n} X^\t_{i}$ are copies of the sum of independent random variables $\sum_{i=1}^{n} X_i$.  
The lemma then says that the the expected maximum of the column sums is not too much larger than the expectation of a single column sum, which implies our tail bounds.
\begin{proof}[Proof of Lemma \ref{lem:main1}]
The proof relies on the existence of a randomized \emph{stable selection procedure} $\sel_{\stab} : [0,1]^{n \times \T} \to [\T]$ defined by $$\prob{\sel_{\stab}(\matx) = \t}  = \frac{1}{C_{\stab}(\matx)}  \exp\left( \tfrac{\stab}{2} \cdot \sum_{i=1}^{n} x^\t_i \right),~~~~~~\textrm{where}~~~~~~C_{\stab}(\matx) =  \sum_{\t =1}^{ \T}\exp\left( \tfrac{\stab}{2} \cdot \sum_{i=1}^{n} x^\t_i \right).$$ The \emph{stability parameter} $\stab>0$ can be chosen arbitrarily.
The key to the proof is that $\sel$ balances two goals: 
\begin{itemize}
\item (Stability) the distribution of $\sel_{\stab}(\matx)$ is not very sensitive to changing a single row of $\matx$ (i.e.~the probability mass function can only change by a small multiplicative factor), and
\item (Accuracy) in expectation, the chosen column $\t = \sel_{\stab}(\matx)$ is close to the largest column of $\matx$ (i.e.~$\sel_\stab(\matx) \approx \argmax_{\t \in [\T]} \sum_{i=1}^n x_i^\t$).
\end{itemize}

For $\matx \in [0,1]^{n \times \T}$, $\vecx \in [0,1]^\T$, and $i \in [n]$, define $(\matx_{-i},\vecx) \in [0,1]^{n \times \T}$ to be $\matx$ with the $i^\text{th}$ row replaced by $\vecx$. That is, $${(\matx_{-i},\vecx)}_{i'}^{\t'} = \left\{ \begin{array}{cl} \matx_{i'}^{\t'} & \text{if } i' \ne i \\ \vecx^{\t'} & \text{if } i'=i \end{array} \right.$$ for $i' \in [n]$ and $\t' \in [\T]$. 

The next two claims establish the stability and accuracy properties of $\sel_{\stab}$.  These two claims are special cases of well known results in the differential privacy literature.
\begin{claim}[Stability] \label{clm:emstab}
For every $\matx \in [0,1]^{n \times \T}$, $\vecx \in [0,1]^\T$, $i \in [n]$, and $\t \in [\T]$,
$$
e^{-\stab} \prob{\sel_{\stab}(\matx)=\t} \leq \prob{\sel_{\stab}(\matx_{-i},\vecx)=\t} \leq e^{\stab} \prob{\sel_{\stab}(\matx)=\t} .
$$
\end{claim}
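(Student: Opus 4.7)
The plan is to verify the claim by direct manipulation of the closed-form expression for $\prob{\sel_\stab(\matx) = \t}$. The key observation is that replacing row $i$ of $\matx$ with $\vecx$ shifts each column sum $S^\t(\matx) := \sum_{i'=1}^n x_{i'}^{\t}$ by $\vecx^\t - \matx_i^\t \in [-1,1]$, since entries lie in $[0,1]$. So for each $\t' \in [\T]$,
$$\left| S^{\t'}(\matx_{-i}, \vecx) - S^{\t'}(\matx) \right| \le 1.$$

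First I would use this to bound the ratio of the unnormalized weights: for every $\t'$,
$$e^{-\stab/2} \cdot \exp\!\left(\tfrac{\stab}{2} S^{\t'}(\matx)\right) \le \exp\!\left(\tfrac{\stab}{2} S^{\t'}(\matx_{-i}, \vecx)\right) \le e^{\stab/2} \cdot \exp\!\left(\tfrac{\stab}{2} S^{\t'}(\matx)\right).$$
Applying this to $\t' = \t$ controls the numerator of $\prob{\sel_\stab(\matx_{-i},\vecx) = \t}$ by a factor of $e^{\pm \stab/2}$ relative to the numerator of $\prob{\sel_\stab(\matx) = \t}$. Summing the same pointwise inequality over $\t' \in [\T]$ controls the normalizer $C_\stab(\matx_{-i}, \vecx)$ by the same factor $e^{\pm \stab/2}$ relative to $C_\stab(\matx)$; this step uses the elementary fact that if each term of a positive sum is multiplied by a value in $[a,b]$, the total is multiplied by some value in $[a,b]$.

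Finally, the probability of selecting $\t$ is the numerator divided by the normalizer. In the worst case the numerator grows by $e^{\stab/2}$ while the normalizer shrinks by $e^{-\stab/2}$ (or vice versa), so the overall multiplicative change is bounded by $e^{\stab/2}/e^{-\stab/2} = e^{\stab}$, giving both sides of the claimed inequality.

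There is no real obstacle here: the argument is entirely routine and is just the standard privacy analysis of the exponential mechanism with sensitivity $1$ and privacy parameter $\stab$. The only point worth being careful about is to split the $e^\stab$ budget as $e^{\stab/2} \cdot e^{\stab/2}$ between numerator and denominator, which is why the exponent in the definition of $\sel_\stab$ carries the factor $\stab/2$ rather than $\stab$.
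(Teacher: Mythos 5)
Your proof is correct and follows essentially the same route as the paper: bound the change in each column sum by $1$, deduce an $e^{\pm\stab/2}$ bound on the ratio of exponential weights, apply it once to the numerator and once (by summing) to the normalizer $C_\stab$, and combine to get the overall $e^{\pm\stab}$ bound. This is the standard exponential-mechanism privacy analysis, as you note, and matches the paper's argument step for step.
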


\begin{proof}[Proof of Claim~\ref{clm:emstab}]
Since
$$
\left| \sum_{i'=1}^n \matx_{i'}^\t - \sum_{i'=1}^n (\matx_{-i},\vecx)_{i'}^\t \right| = \left| \matx_i^\t - \vecx^\t \right| \leq 1,
$$
for every $\t \in [\T]$, we have
$$
e^{-\tfrac{\stab}{2}} \leq
\frac{\exp\left( \tfrac{\stab}{2} \cdot \sum_{i'=1}^{n} x^\t_{i'} \right)}{\exp\left( \tfrac{\stab}{2} \cdot \sum_{i'=1}^{n} (\matx_{-i},\vecx)_{i'}^\t \right)} = \exp\left( \tfrac{\stab}{2} \cdot \left( \sum_{i'=1}^n \matx_{i'}^\t - \sum_{i'=1}^n (\matx_{-i},\vecx)_{i'}^\t  \right) \right) = \exp\left( \tfrac{\stab}{2} \cdot  \left( \matx_i^\t - \vecx^\t \right) \right)\leq e^{\tfrac{\stab}{2}}.$$
Consequently, $$C_\stab(\matx) =  \sum_{\t =1}^\T\exp\left( \tfrac{\stab}{2} \cdot \sum_{i'=1}^{n} x^\t_{i'} \right) \leq  \sum_{\t = 1}^\T e^{\tfrac{\stab}{2}} \cdot \exp\left( \tfrac{\stab}{2} \cdot \sum_{i'=1}^{n} (\matx_{-i},\vecx)^\t_{i'} \right) = e^{\tfrac{\stab}{2}} \cdot C_\stab(\matx_{-i},\vecx)$$ and, likewise, $C_\stab(\matx_{-i},\vecx) \leq e^{\tfrac{\stab}{2}} \cdot C_\stab(\matx)$.
The definition of $\sel_{\stab}$ implies
$$
e^{-\tfrac{\stab}{2}} \cdot e^{-\tfrac{\stab}{2}} \leq \frac{\prob{\sel_{\stab}(\matx_{-i}, \vecx) = \t}}{\prob{\sel_{\stab}(\matx) = \t}} = \frac{\exp\left( \tfrac{\stab}{2} \cdot \sum_{i'=1}^{n} x^\t_{i'} \right)}{\exp\left( \tfrac{\stab}{2} \cdot \sum_{i'=1}^{n} (\matx_{-i},\vecx)_{i'}^\t \right)} \cdot \frac{C_{\stab}(\matx_{-i},\vecx)}{C_{\stab}(\matx)} \leq e^{\tfrac{\stab}{2}} \cdot e^{\tfrac{\stab}{2}},
$$
as required.
%The claim now follows by applying the definition of expectation and averaging over the choices of $\matx$ and $\tilde{x}^{t}_{i}$.
\end{proof}

\begin{claim}[Accuracy] \label{clm:emacc}
For every $\matx \in [0,1]^{n \times \T}$,
$$
\ex{\sel_{\stab}}{\sum_{i=1}^{n} x_{i}^{ \sel_{\stab}(\matx)} } \geq \max_{\t \in [\T]} \sum_{i=1}^{n} x^{\t}_{i} - \frac{2 \ln(\T)}{\stab}
$$
\end{claim}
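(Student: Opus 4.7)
The plan is to prove this accuracy bound using the variational characterization of the Gibbs distribution $p(\t) := \pr{}{\sel_\stab(\matx)=\t} = e^{\stab s^\t/2}/C_\stab(\matx)$, where I write $s^\t := \sum_{i=1}^n x_i^\t$ for brevity. This is essentially the standard utility analysis of the ``exponential mechanism'' from the differential privacy literature, and the proof reduces to two elementary bounds plus a one-line algebraic identity.

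First, I would take logs in the definition of $p(\t)$ to get $\ln p(\t) = (\stab/2) s^\t - \ln C_\stab(\matx)$, multiply by $p(\t)$, and sum over $\t \in [\T]$. Rearranging yields the identity
\begin{equation*}
\ex{\sel_\stab}{\sum_{i=1}^n x_i^{\sel_\stab(\matx)}} \;=\; \frac{2}{\stab}\bigl(\ln C_\stab(\matx) - H(p)\bigr),
\end{equation*}
where $H(p) := -\sum_\t p(\t)\ln p(\t)$ denotes Shannon entropy in nats. Second, I would lower bound $\ln C_\stab(\matx) \geq \stab s^* / 2$, where $s^* := \max_\t s^\t$, which is immediate by dropping all but the maximal summand in the definition of $C_\stab(\matx)$. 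Third, I would apply the universal upper bound $H(p) \leq \ln \T$, since any distribution on $[\T]$ has entropy at most $\ln \T$. Substituting these two inequalities into the identity gives exactly
\begin{equation*}
\ex{\sel_\stab}{\sum_{i=1}^n x_i^{\sel_\stab(\matx)}} \;\geq\; \frac{2}{\stab}\left( \frac{\stab s^*}{2} - \ln \T \right) \;=\; s^* - \frac{2\ln \T}{\stab},
\end{equation*}
which is the claim.

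I do not anticipate any real obstacle here; everything follows from one line of algebra together with elementary properties of log-sum-exp and Shannon entropy. If one preferred to avoid invoking entropy explicitly, the identity can be recast as the statement that $p$ maximizes the variational functional $q \mapsto (\stab/2)\ex{q}{s^\t} + H(q)$ with optimal value $\ln C_\stab(\matx)$; comparing this value to the value $(\stab/2) s^*$ attained by the point mass at the maximizer yields the same inequality. Either presentation is clean and gives the stated constant exactly, in contrast with a tail-bound-and-integrate approach, which appears to lose an extra additive $O(1/\stab)$.
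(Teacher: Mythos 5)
Your proposal is correct and follows essentially the same route as the paper's proof: both derive the identity $\ex{\sel_\stab}{\sum_i x_i^{\sel_\stab(\matx)}} = \frac{2}{\stab}\bigl(\ln C_\stab(\matx) - \mathsf{H}[\sel_\stab(\matx)]\bigr)$ by taking logs in the definition of the Gibbs distribution, then apply the same two bounds $\ln C_\stab(\matx) \geq \frac{\stab}{2}\max_\t \sum_i x_i^\t$ and $\mathsf{H} \leq \ln\T$. The variational reformulation you mention at the end is a nice alternative framing but amounts to the same computation.
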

Observe that as $\eta \to \infty$, the accuracy becomes arbitrarily good, whereas the stability degrades.

\begin{proof}[Proof of Claim~\ref{clm:emacc}]
Recall that, by construction, $\sel_{\stab}$ satisfies, $$\prob{\sel_{\stab}(\matx) = \t}  = \frac{1}{C_{\stab}(\matx)}  \exp\left( \tfrac{\stab}{2} \cdot \sum_{i=1}^{n} x^\t_i \right),~~~~~~\textrm{where}~~~~~~C_{\stab}(\matx) =  \sum_{\t = 1}^{ \T}\exp\left( \tfrac{\stab}{2} \cdot \sum_{i=1}^{n} x^\t_i \right).$$
Rearranging, we have
$$
\sum_{i=1}^{n} x^\t_i = \frac{2}{\stab} \left(\ln C_{\stab}(\matx) + \ln \prob{\sel_{\stab}(\matx) = \t} \right).
$$
Thus,
\begin{align*}
\ex{\sel_{\stab}}{\sum_{i=1}^{n} x^{\sel_{\stab}(\matx)}_{i}} 
={} \sum_{\t = 1}^{\T} \prob{\sel_{\stab}(\matx) = \t} \cdot \sum_{i=1}^{n} x^{\t}_{i}
={} &\sum_{\t = 1}^{\T} \prob{\sel_{\stab}(\matx) = \t} \cdot \frac{2}{\stab} \left(\ln C_{\stab}(\matx) + \ln \prob{\sel_{\stab}(\matx) = \t} \right) \\
={} &\frac{2}{\stab}\left( 1 \cdot \ln C_{\stab}(\matx)+ \sum_{\t = 1}^{\T} \prob{\sel_{\stab}(\matx) = \t} \ln \prob{\sel_{\stab}(\matx) = \t}\right)\\
={} &\frac{2}{\stab}\left( \ln C_{\stab}(\matx) - \mathsf{H}\left[\sel_{\stab}(\matx)\right] \right),
\end{align*}
where $\mathsf{H}\left[\sel_{\stab}(\matx)\right] $ is the Shannon entropy of the random variable $\sel_{\stab}(\matx)$ measured in ``nats,'' rather than bits, as we are using natural logarithms, rather than binary logarithms. Since the random variable $\sel_{\stab}(\matx)$ is supported on a set of size $\T$, the entropy satisfies $\mathsf{H}\left[\sel_{\stab}(\matx)\right] \leq \ln(\T)$, as the entropy function is maximized by the uniform distribution \cite[Theorem 2.6.4]{cover2012elements}.  By definition we have $$C_{\stab}(\matx)  =  \sum_{\t = 1}^{ \T}\exp\left( \tfrac{\stab}{2} \cdot \sum_{i=1}^{n} x^\t_i \right) \geq \max_{\t \in [\T]} \exp\left({\tfrac{\stab}{2} \sum_{i=1}^{n} x^{\t}_{i}}\right)~~~\text{and}~~~\ln C_{\stab}(\matx) \geq \max_{\t \in [\T]} \frac{\stab}{2} \cdot \sum_{i=1}^{n} x^{\t}_{i}.$$
The claim now follows from these two inequalities: $$\ex{\sel_{\stab}}{\sum_{i=1}^{n} x^{\sel_{\stab}(\matx)}_{i}} = \frac{2}{\stab}\left( \ln C_{\stab}(\matx) - \mathsf{H}\left[\sel_{\stab}(\matx)\right] \right) \geq \frac{2}{\stab}\left(\max_{\t \in [\T]}  \frac{\stab}{2} \cdot \sum_{i=1}^{n} x^{\t}_{i} - \ln(\T) \right).$$
\end{proof}
In order to complete the proof of the lemma, we have the following claim. This claim uses stability (Claim \ref{clm:emstab}) to show that $\ex{\matX,\sel_{\stab}}{\sum_{i=1}^{n} X_{i}^{ \sel_{\stab}(\matX)} }$ is not much larger than $\ex{\matX,\sel_{\stab}}{\sum_{i=1}^{n} X_{i}^{\t} }$ for a fixed $\t \in [\T]$.
\begin{claim}\label{clm:hybrid} If $\expe{\sum_{i=1}^n X_i^\t} \leq \mu$ for all $\t \in [\T]$, then
$$
\ex{\matX,\sel_{\stab}}{\sum_{i=1}^{n} X_{i}^{ \sel_{\stab}(\matX)} } \leq e^{\stab} \mu.
$$
\end{claim}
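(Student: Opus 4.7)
\medskip
\noindent\textbf{Proof proposal.} The plan is to use the stability property (Claim~\ref{clm:emstab}) in a standard ``stability-implies-generalization'' hybrid argument, in which we decouple $X_i$ from $\sel_\stab(\matX)$ one row at a time by swapping row $i$ for an independent fresh copy. The payoff of decoupling is that $X_i$ becomes independent of the chosen column index, so the expectation factors and we can invoke the hypothesis $\expe{\sum_i X_i^\t} \leq \mu$.

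Concretely, I would proceed as follows. For each $i \in [n]$, let $\vecx'^{(i)} \in [0,1]^\T$ be an independent copy of the $i$-th row of $\matX$ (this exists because the rows of $\matX$ are independent). Applying Claim~\ref{clm:emstab} row by row, for every realization of $\matX$ and $\vecx'^{(i)}$ and every $\t \in [\T]$,
$$
\prob{\sel_{\stab}(\matX) = \t} \;\leq\; e^{\stab} \prob{\sel_{\stab}((\matX_{-i},\vecx'^{(i)})) = \t}.
$$
Multiplying both sides by $X_i^\t \in [0,1]$, summing over $\t$, and taking expectations yields
$$
\ex{\matX,\sel_\stab}{X_i^{\sel_\stab(\matX)}} \;\leq\; e^{\stab} \cdot \ex{\matX,\vecx'^{(i)},\sel_\stab}{X_i^{\sel_\stab((\matX_{-i},\vecx'^{(i)}))}}.
$$

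The key observation is now that on the right-hand side, $X_i$ (the $i$-th row of $\matX$) is independent of the random index $\sel_\stab((\matX_{-i}, \vecx'^{(i)}))$, since that index is a function of $\matX_{-i}$, $\vecx'^{(i)}$, and the internal randomness of $\sel_\stab$---all of which are independent of $X_i$ by our choice of $\vecx'^{(i)}$ and by the row-independence of $\matX$. Hence, letting $J$ denote this random index,
$$
\ex{}{X_i^{J}} \;=\; \sum_{\t=1}^{\T} \prob{J=\t} \cdot \ex{}{X_i^\t}.
$$
Summing over $i$, swapping the order of summation, and using the hypothesis $\sum_{i=1}^n \ex{}{X_i^\t} \leq \mu$ for every $\t$ gives
$$
\ex{\matX,\sel_\stab}{\sum_{i=1}^n X_i^{\sel_\stab(\matX)}} \;\leq\; e^{\stab} \sum_{\t=1}^{\T} \prob{J=\t} \cdot \sum_{i=1}^{n} \ex{}{X_i^\t} \;\leq\; e^{\stab} \mu,
$$
which is the desired bound.

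The only potentially subtle step is the independence bookkeeping in the third paragraph: one must be sure that $X_i$ is independent of the swapped-in selection index, and this is exactly why we introduce a \emph{separate} fresh row $\vecx'^{(i)}$ for each $i$ rather than attempting a single global swap. Once independence is established the rest is a direct computation, so I expect no further obstacle.
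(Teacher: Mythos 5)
Your proof is correct and takes essentially the same route as the paper: introduce a fresh copy of each row, apply the stability claim pointwise, and then exploit the fact that after resampling, row $i$ is independent of the chosen column index so the expectation factors. There is one small gap in the final display. You write $e^\stab \sum_\t \prob{J=\t} \sum_i \ex{}{X_i^\t}$ as though $J$ were a single random index, but your construction gives a different index $J^{(i)} = \sel_\stab(\matX_{-i},\vecx'^{(i)})$ for each $i$, and the quantity you actually need to bound is $\sum_i \sum_\t \prob{J^{(i)}=\t} \ex{}{X_i^\t}$. The swap of summation that lets you pull out $\sum_i \ex{}{X_i^\t}$ is valid only because the law of $J^{(i)}$ does not depend on $i$: since $\vecx'^{(i)}$ is an independent copy of row $i$ and the rows of $\matX$ are independent, $(\matX_{-i},\vecx'^{(i)})$ is distributed identically to $\matX$, so each $J^{(i)}$ is distributed as $\sel_\stab(\matX)$. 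This should be said explicitly. The paper's write-up sidesteps the per-$i$ bookkeeping by drawing a single independent matrix $\tilde\matX$ and using the distributional identity of $\bigl(\sel_\stab(\matX_{-i},\tilde\matX_i), X_i^\t\bigr)$ with $\bigl(\sel_\stab(\matX), \tilde X_i^\t\bigr)$, which keeps the selection index fixed at $\sel_\stab(\matX)$ for every $i$ and makes the final swap and factorization immediate; functionally the two arguments are the same.
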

\begin{proof}[Proof of Claim~\ref{clm:hybrid}]
Let $\tilde\matX$ be an independent and identical copy of $\matX$. Let $\tilde\matX_i \in [0,1]^\T$ denote the $i^\text{th}$ row of $\tilde\matX$. The key observation is that the pair $(\matX_{-i},\tilde\matX_i)$ and $X_i^\t$ is identically distributed to the pair $\matX$ and $\tilde X_i^\t$ --- this is where we use the independence assumption. Thus the pair $\sel_\stab(\matX_{-i},\tilde\matX_i)$ and $X_i^\t$ is identically distributed to the pair $\sel_\stab(\matX)$ and $\tilde{X}_i$, which means we can swap them below. By Claim \ref{clm:emstab},
\begin{align*}
\ex{\matX,\sel_{\stab}}{\sum_{i=1}^{n} X_{i}^{ \sel_{\stab}(\matX)} } =& \ex{\matX}{\sum_{\t =1}^\T \sum_{i=1}^n \pr{\sel_{\stab}}{\sel_{\stab}(\matX)=\t} X_i^\t}\\
\leq& \ex{\matX,\tilde\matX}{\sum_{\t =1}^\T \sum_{i=1}^n e^{\stab}\pr{\sel_{\stab}}{\sel_{\stab}(\matX_{-i},\tilde\matX_i)=\t} X_i^\t}\\
=& \ex{\matX,\tilde\matX}{\sum_{\t =1}^\T \sum_{i=1}^n e^{\stab}\pr{\sel_{\stab}}{\sel_{\stab}(\matX)=\t} \tilde X_i^\t}\\
\leq& \ex{\matX,\tilde\matX}{\sum_{\t =1}^\T e^{\stab}\pr{\sel_{\stab}}{\sel_{\stab}(\matX)=\t} \mu}\\
=& e^{\stab} \mu.
\end{align*}
\end{proof}
Combining Claim \ref{clm:emacc} with Claim \ref{clm:hybrid} (setting $\mu = \max_{\t \in [\T]} \expe{\sum_{i=1}^n \matX_i^\t}$), we have $$\ex{\matX}{\max_{\t \in [\T]} \sum_{i=1}^{n} X^{\t}_{i} - \frac{2 \ln(\T)}{\stab}} \leq \ex{\matX,\sel_{\stab}}{\sum_{i=1}^{n} X_{i}^{ \sel_{\stab}(\matx)} } \leq e^{\stab} \mu = e^\stab \max_{\t \in [\T]} \ex{\matX}{\sum_{i=1}^{n} X_{i}^{\t} }.$$ Rearranging yields the lemma.
\end{proof}

Lemma \ref{lem:main1} readily yields the bound claimed in the introduction:

\begin{prop}[Proposition \ref{prop:Max-intro}]\label{prop:Max-later}Let $X_1, \cdots, X_n$ be independent random variables supported on $[0,1]$ and $\mu_i=\expe{X_i}$ for each $i$. Define $Y = \sum_{i=1}^n X_i - \mu_i$. Fix $\T \in \mathbb{N}$ and let $Y^1, \cdots, Y^\T$ be independent copies of $Y$. Then $$\expe{\max\left\{ 0, Y^1, \cdots, Y^\T \right\}} \leq 4 \sqrt{n \cdot \ln (\T+1)} .$$\end{prop}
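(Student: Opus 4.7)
The plan is to apply Lemma~\ref{lem:main1} to a cleverly augmented matrix that converts the ``$0$'' in $\max\{0, Y^1, \ldots, Y^\T\}$ into an extra column. Specifically, I would construct an $n \times (\T+1)$ random matrix $\matX$ whose column $\t \in [\T]$ is an independent copy $(X_1^\t, \ldots, X_n^\t)$ of $(X_1, \ldots, X_n)$, and whose final column (index $\T+1$) is the deterministic vector $(\mu_1, \ldots, \mu_n)$. Every entry lies in $[0,1]$ since each $\mu_i \in [0,1]$, and the rows of $\matX$ are independent because the $X_i$'s are independent across $i$ (the final column contributes no randomness). Writing $M = \sum_{i=1}^n \mu_i$, we have $\sum_i X_i^\t = M + Y^\t$ for $\t \in [\T]$ and $\sum_i X_i^{\T+1} = M$, while $\expe{\sum_i X_i^\t} = M$ for every $\t \in [\T+1]$.

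Consequently $\max_{\t \in [\T+1]} \sum_i X_i^\t = M + \max\{0, Y^1, \ldots, Y^\T\}$. Applying Lemma~\ref{lem:main1} with the augmented column count $\T+1$ and rearranging gives, for every $\stab > 0$,
\[
\expe{\max\{0, Y^1, \ldots, Y^\T\}} \le (e^\stab - 1)\, M + \frac{2 \ln(\T+1)}{\stab} \le (e^\stab - 1)\, n + \frac{2 \ln(\T+1)}{\stab}.
\]

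I would then choose $\stab = \sqrt{\ln(\T+1)/n}$ and invoke the elementary inequality $e^\stab - 1 \le 2\stab$, which holds for all $\stab \le \ln 3$. Balancing the two terms yields $2n\stab + 2\ln(\T+1)/\stab = 4\sqrt{n \ln(\T+1)}$, as desired. This estimate for $e^\stab - 1$ is valid exactly when $\ln(\T+1) \le (\ln 3)^2\, n$; in the complementary regime $\ln(\T+1) > (\ln 3)^2\, n$, the trivial bound $\max\{0, Y^1, \ldots, Y^\T\} \le n$ is already smaller than $4\sqrt{n \ln(\T+1)}$, so the claim is immediate.

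The main obstacle is conceptual rather than technical: one must notice that inserting a column of constants $\mu_i$---equivalently, a ``ghost'' copy of $Y$ pinned at $0$---is exactly the right way to embed the ``$0$'' in the max into the hypotheses of Lemma~\ref{lem:main1}, which require entries in $[0,1]$ and independent rows (rather than, say, symmetric entries in $[-1,1]$). Once the matrix is set up, the rest is a one-line application of the lemma together with a standard optimization over $\stab$.
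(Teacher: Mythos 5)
Your proof is correct and takes essentially the same approach as the paper: augmenting the matrix with a constant $(\mu_1,\dots,\mu_n)$ column to absorb the ``$0$'' into the max, applying Lemma~\ref{lem:main1} with $\T+1$ columns, bounding $\mu\le n$, using $e^\stab-1\le 2\stab$, and setting $\stab=\sqrt{\ln(\T+1)/n}$. The only cosmetic difference is that you place the trivial fallback (when $\stab$ is too large) at the end with the slightly looser threshold $\stab\le\ln 3$, whereas the paper dispatches the case $\T\ge e^n-1$ upfront so that $\stab<1$ afterward.
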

\begin{proof} 
Firstly, if $\T \geq e^n-1$, then the result holds trivially as $\max\left\{ 0, Y^1, \cdots, Y^\T \right\} \leq n$ with certainty. So we may assume $\T < e^n-1$.

Let $\mu = \sum_{i=1}^n \mu_i$. 
For each $i \in [n]$, let $X^1_i, \cdots, X^\T_i$ be independent copies of $X_i$, so that $Y^\t = \sum_{i=1}^n X_i^\t - \mu_i$ for all $\t \in [\T]$. Let $X_i^{\T+1} = \mu_i$ be a constant ``dummy random varaible'' for each $i$.

Now we apply Lemma \ref{lem:main1} to the random matrix $\matX \in [0,1]^{n \times (\T+1)}$: $$
\forall \stab > 0 ~~~~~ \expe{\max_{\t \in [\T+1]} \sum_{i=1}^{n} X^\t_i} \leq e^{\stab} \max_{\t \in [\T+1]} \expe{\sum_{i=1}^{n} X^\t_i} + \frac{2 \ln(\T+1)}{\stab}.
$$

By construction, $\expe{\sum_{i=1}^{n} X^\t_i} = \sum_{i=1}^n \mu_i = \mu$ for all $\t \in [\T+1]$. Also $\sum_{i=1}^n X_i^\t = Y^\t + \mu$ for all $\t \in [\T]$ and $\sum_{i=1}^n X_i^{\T+1} = 0 + \mu$. Substituting in these expressions yields
$$\forall \stab > 0 ~~~~~ \expe{\max\left\{Y^1 + \mu, Y^2 + \mu, \cdots, Y^\T+\mu, 0+\mu \right\}} \leq e^{\stab} \mu + \frac{2 \ln(\T+1)}{\stab}.$$
Subtracting $\mu$ gives $$\forall \stab > 0 ~~~~~ \expe{\max\left\{Y^1, Y^2, \cdots, Y^\T, 0 \right\}} \leq \left( e^{\stab} -1 \right) \mu + \frac{2 \ln(\T+1)}{\stab}.$$
Finally, we use the (crude) bound $\mu \leq n$ and the approximation $e^\stab -1 \leq 2\stab$ for $\stab < 1$ to obtain $$\forall \stab \in (0,1) ~~~~~ \expe{\max\left\{Y^1, Y^2, \cdots, Y^\T, 0 \right\}} \leq 2\stab n + \frac{2 \ln(\T+1)}{\stab}.$$
Set $\stab = \sqrt{\ln(\T+1)/n} < \sqrt{\ln(e^n)/n} = 1$ to complete the proof.
\end{proof}

%\subsection{The Additive Chernoff Bound}

Now we can prove the subgaussian tail bound using the proxy bound of Proposition \ref{prop:Max-later}.

\begin{theorem}[Theorem \ref{thm:Hoeffding-intro}]\label{thm:Hoeffding-later}
%If $X_{1},\dots,X_{n}$ are independent random variables supported on $[0,1]$ and $\expe{\sum_{i=1}^{n} X_i} = \mu$, then for every $\eps > 0$ $$\prob{\sum_{i=1}^{n} X_i \geq \mu + \eps n} \leq e^{-3\eps^2 n/128}$$\jnote{$3/128 \approx .0234$}
If $X_1, \cdots, X_n$ are independent random variables supported on $[0,1]$ and $\mu_i=\expe{X_i}$ for every $i$, then $$\forall \varepsilon \geq 0 ~~~~~ \prob{\sum_{i=1}^n X_i -\mu_i \geq \varepsilon n} \leq e^{1-\varepsilon^2 n / 64}.$$

\tnote{Hmm. We are off by a factor of 128.} \jnote{Yeah, bummer.  There are several inefficiencies.}
\end{theorem}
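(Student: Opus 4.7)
The plan is to combine the proxy bound of Proposition \ref{prop:Max-later} with the max-to-tail conversion of Lemma \ref{lem:MaxTB-intro}, choosing the number of copies $\T$ so that the two bounds match at the desired threshold $\varepsilon n$.

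First I would handle the trivial regime: if $\varepsilon^2 n \leq 64$, then $e^{1-\varepsilon^2 n/64} \geq 1$ and the claim holds vacuously, so we may assume $\varepsilon^2 n/64 > 1$. In the nontrivial regime, I would set $Y = \sum_{i=1}^n X_i - \mu_i$ and choose $\T$ to be the largest positive integer with $4\sqrt{n \ln(\T+1)} \leq \varepsilon n / 2$, equivalently $\T \leq e^{\varepsilon^2 n/64} - 1$. With this choice, Proposition \ref{prop:Max-later} immediately gives
$$
\expe{\max\{0, Y^1, \ldots, Y^\T\}} \leq 4\sqrt{n \ln(\T+1)} \leq \frac{\varepsilon n}{2},
$$
so $\varepsilon n \geq 2\expe{\max\{0, Y^1, \ldots, Y^\T\}}$. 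Applying Lemma \ref{lem:MaxTB-intro} then yields
$$
\prob{\sum_{i=1}^n X_i - \mu_i \geq \varepsilon n} \leq \frac{\ln(2)}{\T}.
$$

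The last step is to check $\ln(2)/\T \leq e^{1 - \varepsilon^2 n / 64}$. By the choice of $\T$ as a floor, we have $\T \geq e^{\varepsilon^2 n/64} - 2$, and since we already reduced to the regime where $e^{\varepsilon^2 n/64}$ is at least a modest constant (say $\geq 4$, after possibly absorbing an even smaller range into the trivial case), this gives $\T \geq \tfrac{1}{2} e^{\varepsilon^2 n/64}$. Hence $\ln(2)/\T \leq 2\ln(2) \cdot e^{-\varepsilon^2 n/64} \leq e \cdot e^{-\varepsilon^2 n/64} = e^{1-\varepsilon^2 n/64}$, which is exactly the stated bound. (The slack factor of $e$ in the exponent is precisely what lets us absorb the floor in $\T$ and the constant $\ln 2$ from Lemma \ref{lem:MaxTB-intro}.)

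The only real obstacle here is constant-chasing: making sure that the integer rounding of $\T$, the constant $4$ from Proposition \ref{prop:Max-later}, and the $\ln 2$ from Lemma \ref{lem:MaxTB-intro} all fit inside the target form $e^{1-\varepsilon^2 n/64}$ without cascading errors. As the authors note in a marginal comment, the constants are loose by a large factor, so the challenge is merely to pick the cutoff between the ``trivial'' and ``nontrivial'' regimes cleanly rather than to optimize anything.
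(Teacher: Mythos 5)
Your proposal follows essentially the same route as the paper: invoke Proposition \ref{prop:Max-later}, plug into Lemma \ref{lem:MaxTB-intro}, and choose $\T = \lfloor e^{\varepsilon^2 n/64} - 1\rfloor$ (which is exactly your ``largest positive integer with $\T \leq e^{\varepsilon^2 n/64} - 1$''). The one wrinkle is in the final constant-chasing. Your intermediate bound $\T \geq \tfrac12 e^{\varepsilon^2 n/64}$ requires $e^{\varepsilon^2 n/64} \geq 4$, i.e.\ $\varepsilon^2 n \geq 64\ln 4 \approx 88.7$, but the trivial regime only covers $\varepsilon^2 n \leq 64$ (beyond that, $e^{1-\varepsilon^2 n/64} < 1$ so the claim is not vacuous), so you cannot ``absorb an even smaller range into the trivial case'' without losing the bound in the window $64 < \varepsilon^2 n < 88.7$. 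The gap is real but harmless: a direct check shows $e^{\varepsilon^2 n/64} - 2 \geq (\ln 2/e)\,e^{\varepsilon^2 n/64}$ already for $\varepsilon^2 n \geq 64\ln\!\bigl(2/(1-\ln 2/e)\bigr) \approx 63.2$, which is below your trivial threshold. The paper sidesteps the issue by not splitting into regimes at all: it writes $\prob{Y \geq \varepsilon n} \leq \min\{1, \ln 2/(\T)\}$ with $\T \geq e^{\varepsilon^2 n/64} - 2$ and shows directly that $\min\{1, \ln 2/(t-2)\} \leq (2+\ln 2)/t \leq e/t$ for all $t>0$, which is tidier and avoids the near-miss. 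So: same argument, and your proof needs only a small arithmetic patch in the last line to close cleanly.
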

Needless to say, we have not optimized the constants in Theorem \ref{thm:Hoeffding-later}.
\begin{proof}
Fix $\T \in \mathbb{N}$ to be determined later.
Let $Y = \sum_{i=1}^{n} X_i - \mu$ and let $Y^1, \cdots, Y^\T$ be independent copies of $Y$.
By Proposition \ref{prop:Max-later}, we have $$\expe{\max\{0,Y^1,\cdots,Y^\T\}} \leq 4 \sqrt{n \ln (\T + 1)}.$$
By Lemma \ref{lem:MaxTB-intro}, $\prob{Y \geq 2\expe{\max\{0,Y^1,\cdots,Y^\T\}}} \leq \ln(2)/\T.$  Now, set $m=\left\lfloor e^{\varepsilon^2 n / 64} -1 \right\rfloor$, so that $8\sqrt{n \ln(m+1)} \leq \varepsilon n$.
Then 
$$
\prob{Y \geq \varepsilon n} \leq \prob{Y \geq 8 \sqrt{n \ln(m+1)}} \leq \prob{Y \geq 2\expe{\max\{0,Y^1,\cdots,Y^\T\}}} \leq \frac{\ln(2)}{\T} \leq \frac{\ln(2)}{e^{\varepsilon^2 n / 64} -2}.
$$
Thus $$\prob{Y \geq \varepsilon n} \leq \min \left\{ 1, \frac{\ln(2)}{e^{\varepsilon^2 n / 64} -2} \right\} \leq (2+\ln 2) \cdot e^{-\varepsilon^2 n / 64} \leq e^{1-\varepsilon^2 n / 64}.$$
\end{proof}

\section{Extensions}

To demonstrate the versatility of our techniques, we show how they can be used to prove some well-known generalizations of Theorem \ref{thm:Hoeffding-intro}. %We do not provide full proofs; rather we informally describe how the proof in Section \ref{sec:Proof} can be modified to achieve these results.

\subsection{Multiplicative Tail Bound}
\begin{theorem}
If $X_{1},\dots,X_{n}$ are independent random variables supported on $[0,1]$ and $\mu=\expe{\sum_{i=1}^{n} X_i} $, then
$$
\forall \varepsilon \geq 0 ~~~~~~
\prob{\sum_{i=1}^{n} X_i \geq (1+\varepsilon) \mu} \leq e \cdot \left(1 + \frac{\varepsilon}{4}\right)^{-\varepsilon \mu/8}.
$$
Moreover, the above bound implies
$$
\forall \eps \in [0,10]~~~~~\prob{\sum_{i=1}^{n} X_i \geq (1+\varepsilon) \mu} \leq e^{1-\varepsilon^2 \mu/64}.
$$
\end{theorem}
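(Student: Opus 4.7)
The plan is to adapt the proof of Proposition~\ref{prop:Max-later} by optimizing the stability parameter $\stab$ to yield a \emph{multiplicative} deviation bound in terms of $\mu$, rather than the additive bound in terms of $n$. First, I would invoke the Main Lemma (Lemma~\ref{lem:main1}) applied to the random $n \times (\T+1)$ matrix $\matX$ whose first $\T$ columns are independent copies of $(X_1, \ldots, X_n)$ and whose final column is the deterministic vector $(\mu_1, \ldots, \mu_n)$. Exactly as in the proof of Proposition~\ref{prop:Max-later}, this yields, for every $\stab > 0$ and every $\T \in \mathbb{N}$,
$$\expe{\max\left\{0, Y^1, \ldots, Y^\T\right\}} \leq (e^\stab - 1)\mu + \frac{2\ln(\T+1)}{\stab},$$
where $Y^\t = \sum_{i=1}^n X_i^\t - \mu_i$ and $Y^1, \ldots, Y^\T$ are independent copies of $Y = \sum_{i=1}^n X_i - \mu$.

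Next, instead of using the approximation $e^\stab - 1 \leq 2\stab$ (valid only for small $\stab$) and the crude bound $\mu \leq n$, I would directly set $\stab = \ln(1 + \eps/4)$, making the first term equal exactly $\eps\mu/4$. Choosing $\T$ to be the largest positive integer with $\T+1 \leq (1+\eps/4)^{\eps\mu/8}$ ensures the second term is at most $\eps\mu/4$ as well, so $2\expe{\max\{0, Y^1, \ldots, Y^\T\}} \leq \eps\mu$. Lemma~\ref{lem:MaxTB-intro} then gives $\prob{Y \geq \eps\mu} \leq \ln(2)/\T$. A small amount of casework on the magnitude $A := (1+\eps/4)^{\eps\mu/8}$---combining the above with the trivial bound $\prob{\cdot} \leq 1$ in the regime where $A < 2$---yields the claimed $e \cdot (1+\eps/4)^{-\eps\mu/8}$. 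For the subgaussian form valid in $\eps \in [0,10]$, I would then apply the elementary inequality $\ln(1+x) \geq x/2$, which holds for $x \in [0, 5/2]$ (tight near $x = 5/2$, where $\ln(7/2) \approx 1.2528 > 1.25$). Taking $x = \eps/4 \leq 5/2$ gives $(1+\eps/4)^{\eps\mu/8} \geq e^{\eps^2\mu/64}$, and substituting into the first bound produces $e^{1 - \eps^2\mu/64}$.

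The main obstacle will not be the conceptual structure---which directly mirrors the additive case---but rather the bookkeeping around integer rounding of $\T$ and combining with the trivial bound in the regime where $A$ is close to $1$. The constant $e$ in the target expression is precisely what absorbs this rounding slack; verifying that it suffices across the transition between the trivial and nontrivial bound regimes is the only delicate step. Everything else is a mechanical substitution into the Main Lemma.
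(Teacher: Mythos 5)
Your proof is correct, and it reaches the same bound by essentially the same method: both you and the paper apply Lemma~\ref{lem:main1} to the augmented $n \times (\T+1)$ matrix with a dummy constant column, set $\stab = \ln(1 + \eps/4)$ so that the multiplicative term $(e^\stab - 1)\mu$ becomes $\eps\mu/4$, and then extract the tail bound from the expected-max bound via a Markov-style argument. The one organizational difference is where the Markov step lives: you route through the generic Lemma~\ref{lem:MaxTB-intro}, choosing $\T$ up front so that $2\,\ex{}{\max\{0, Y^1, \ldots, Y^\T\}} \leq \eps\mu$, exactly paralleling the structure of the additive proof of Theorem~\ref{thm:Hoeffding-later}. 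The paper instead works ``backwards,'' fixing $\T = \lceil \ln(2)/\delta \rceil$ in terms of the unknown tail probability $\delta$, deriving the inline lower bound
$$
\ex{}{\max_{\t \in [\T+1]} \sum_{i=1}^{n} X^\t_i} \geq \mu + \eps\mu\,(1 - (1-\delta)^\T) \geq \mu(1 + \eps/2),
$$
and turning the resulting constraint $\ln(\T+1) \geq \frac{\eps\mu}{8}\ln(1 + \eps/4)$ into a bound on $\delta$. These two variants are dual to one another; your version has the advantage of reusing an already-proven lemma rather than inlining a bespoke calculation, while the paper's version makes the dependence between $\T$ and $\delta$ explicit. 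Your casework and constants check out: for $A := (1+\eps/4)^{\eps\mu/8} \geq 2$ one has $\T = \lfloor A - 1\rfloor \geq 1$ and $\ln(2)/\T \leq e/A$ (this follows from $A < \T + 2$ and $\ln(2)(\T+2) \leq e\T$ for $\T \geq 1$), while for $A < 2$ the trivial bound $1 \leq e/A$ suffices; and the inequality $\ln(1+x) \geq x/2$ on $[0, 5/2]$, used with $x = \eps/4$, is exactly what the paper also invokes for the second claim.
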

\begin{proof}
Let $\delta = \prob{\sum_{i=1}^{n} X_i \geq (1+\varepsilon) \mu}$ and $\T=\lceil \ln(2)/\delta \rceil$.
For each $i \in [n]$, let $X^1_i, \cdots, X^\T_i$ be independent copies of $X_i$ and let $X^{\T+1}_i=\expe{X}$ be a constant. Now we apply Lemma \ref{lem:main1} to the random matrix $\matX \in [0,1]^{n \times (\T+1)}$ to obtain
\begin{equation} \label{eq2}
\expe{\max_{\t \in [\T+1]} \sum_{i=1}^{n} X^\t_i } \leq e^{\stab} \mu + \frac{2 \ln(\T+1)}{\stab}.
\end{equation}
On the other hand, $$\expe{\max_{\t \in [\T+1]} \sum_{i=1}^{n}\! X^\t_i } \geq \mu + \varepsilon \mu \cdot \prob{\max_{\t \in [\T]} \sum_{i=1}^{n} \! X^\t_i \geq (1\!+\!\varepsilon) \mu} \!=\! \mu + \varepsilon \mu (1-(1-\delta)^\T) \geq \mu + \varepsilon \mu (1-e^{-\delta\T}) \geq \mu \left( 1 \!+\! \frac{\varepsilon}{2} \right).$$
Combining this with \eqref{eq2} and subtracting $\mu$ from both sides gives \begin{equation*} \forall \stab>0 ~~~~~~~~~~\frac{\varepsilon}{2} \mu \leq \left( e^\stab - 1 \right) \mu + \frac{2 \ln (\T+1)}{\stab}.\label{eqn:multbd}\end{equation*}
Now we set $\stab = \ln(1+\varepsilon/4)$ and rearrange to get
$ \ln(\T+1) \geq \frac{\varepsilon}{8} \mu \ln\left(1+\frac{\varepsilon}{4}\right).$
Since $m \leq \ln(2)/\delta + 1$, we have $$\prob{\sum_{i=1}^{n} X_i \geq (1+\varepsilon) \mu} = \delta \leq \min\left\{ 1, \frac{\ln 2}{m-1} \right\} \leq \frac{2+\ln 2}{m+1} \leq e \left(1 + \frac{\varepsilon}{4}\right)^{-\varepsilon \mu/8}.$$
This gives the first half of the theorem. The second half follows from the fact that, if $0 \leq \varepsilon \leq 10$, then $1+\varepsilon/4 \geq \exp(\varepsilon/8)$.
\end{proof}

We can also bound $\prob{\sum_{i=1}^{n} X_i \leq (1-\varepsilon) \mu}$. This requires changing Lemma \ref{lem:main1} to the following.
\begin{lemma} \label{lem:main-neg}
If $\matX$ is a random $n \times \T$ matrix with entries supported on $[0,1]$ and independent rows, then
$$
\forall \stab > 0 ~~~~~ \expe{\min_{\t \in [\T]} \sum_{i=1}^{n} X^\t_i} \geq e^{-\stab} \min_{\t \in [\T]} \expe{\sum_{i=1}^{n} X^\t_i} - \frac{2 \ln(\T)}{\stab}.
$$
\end{lemma}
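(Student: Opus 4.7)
The plan is to mirror the proof of Lemma~\ref{lem:main1} with the sign of the exponent in the stable selection procedure flipped, so that the procedure prefers the column with the \emph{smallest} sum rather than the largest. A tempting shortcut is to apply Lemma~\ref{lem:main1} directly to the matrix with entries $1-X_i^\t \in [0,1]$, but the resulting bound works out to $\ex{\matX}{\min_\t \sum_i X_i^\t} \geq e^\stab \min_\t \expe{\sum_i X_i^\t} - (e^\stab-1)n - 2\ln(\T)/\stab$, which is neither the claimed form nor tight when $\min_\t \expe{\sum_i X_i^\t}$ is much smaller than $n$. Instead, I will redo the three-step argument from scratch.

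Define $\sel_\stab^- : [0,1]^{n\times\T}\to[\T]$ by
$$\prob{\sel_\stab^-(\matx)=\t} = \frac{\exp\bigl(-\tfrac{\stab}{2}\sum_{i=1}^n x_i^\t\bigr)}{C^-_\stab(\matx)}, \qquad C^-_\stab(\matx) = \sum_{\t=1}^\T \exp\bigl(-\tfrac{\stab}{2}\sum_{i=1}^n x_i^\t\bigr).$$
The stability claim (Claim~\ref{clm:emstab}) carries over verbatim, because its proof only uses $|\matx_i^\t - \vecx^\t|\leq 1$ and is insensitive to the sign of the exponent; I obtain $e^{-\stab}\prob{\sel^-_\stab(\matx)=\t} \leq \prob{\sel^-_\stab(\matx_{-i},\vecx)=\t} \leq e^\stab \prob{\sel^-_\stab(\matx)=\t}$. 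For accuracy, inverting the formula yields $\sum_i x_i^\t = -\tfrac{2}{\stab}\bigl(\ln C^-_\stab(\matx) + \ln\prob{\sel^-_\stab(\matx)=\t}\bigr)$; taking expectation over $\sel^-_\stab$ produces $\ex{\sel^-_\stab}{\sum_i x_i^{\sel^-_\stab(\matx)}} = \tfrac{2}{\stab}\bigl(\mathsf{H}[\sel^-_\stab(\matx)] - \ln C^-_\stab(\matx)\bigr)$, and then $\mathsf{H}[\sel^-_\stab(\matx)]\leq\ln\T$ together with $\ln C^-_\stab(\matx) \geq -\tfrac{\stab}{2}\min_\t\sum_i x_i^\t$ gives the \emph{upper} bound
$$\ex{\sel^-_\stab}{\sum_{i=1}^n x_i^{\sel^-_\stab(\matx)}} \leq \min_{\t\in[\T]}\sum_{i=1}^n x_i^\t + \frac{2\ln(\T)}{\stab}.$$

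The hybrid step (Claim~\ref{clm:hybrid}) uses the lower-bound side of stability together with the non-negativity of $X_i^\t$ and the resampling trick: if $\tilde\matX$ is an independent copy of $\matX$, then the pair $(\sel^-_\stab(\matX_{-i},\tilde\matX_i), X_i^\t)$ has the same distribution as $(\sel^-_\stab(\matX), \tilde X_i^\t)$, so setting $\mu = \min_\t \expe{\sum_i X_i^\t}$ gives $\ex{\matX,\sel^-_\stab}{\sum_i X_i^{\sel^-_\stab(\matX)}} \geq e^{-\stab}\mu$. Taking expectation of the pointwise accuracy bound over $\matX$ and chaining it with this lower bound, then rearranging, produces the lemma. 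I do not anticipate a real obstacle: the argument is essentially a sign-reversed copy of the proof of Lemma~\ref{lem:main1}, and the only care required is tracking which direction each inequality goes.
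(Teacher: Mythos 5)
Your proof is correct and takes the same route the paper intends: the paper's own "proof" of Lemma~\ref{lem:main-neg} is merely the remark that one repeats the proof of Lemma~\ref{lem:main1} with $\stab$ replaced by $-\stab$, the maximum replaced by the minimum, and inequalities reversed, which is exactly the sign-flipped selection procedure, accuracy, and hybrid argument you carried out. Your side observation that applying Lemma~\ref{lem:main1} to $1-X_i^\t$ yields a bound of a different (and generally weaker) form is accurate and explains why the sign-flip cannot be shortcut.
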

The proof of Lemma \ref{lem:main-neg} is almost identical to that of Lemma \ref{lem:main1}, except the maximum is replaced with the minimum, $\stab$ is replaced with $-\stab$, and some inequalities are reversed.

Lemma \ref{lem:main-neg} yields the following bound.
\begin{theorem}
If $X_{1},\dots,X_{n}$ are independent random variables supported on $[0,1]$ and $\mu=\expe{\sum_{i=1}^{n} X_i} $, then
$$
\forall \varepsilon \in [0,1] ~~~~~
\prob{\sum_{i=1}^{n} X_i \leq (1-\varepsilon) \mu} \leq e^{1-\varepsilon^2 \mu/32}.
$$
\end{theorem}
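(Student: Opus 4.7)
The plan is to essentially mirror the proof of the upper multiplicative tail bound, but using Lemma \ref{lem:main-neg} (the minimum version) in place of Lemma \ref{lem:main1}. Let $\delta = \prob{\sum_i X_i \leq (1-\varepsilon)\mu}$ and set $\T = \lceil \ln(2)/\delta \rceil$. Build the matrix $\matX \in [0,1]^{n \times (\T+1)}$ whose first $\T$ columns are independent copies of $(X_1,\dots,X_n)$ and whose last column is the deterministic ``dummy'' column $X_i^{\T+1} = \mu_i$, so that $\sum_i X_i^{\T+1} = \mu$. Apply Lemma \ref{lem:main-neg} to obtain
$$\expe{\min_{\t \in [\T+1]} \sum_{i=1}^n X_i^\t} \geq e^{-\stab}\mu - \frac{2\ln(\T+1)}{\stab}.$$

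Next, upper bound the same quantity using the dummy column as a ceiling. The dummy column always contributes exactly $\mu$, so $\min_\t \sum_i X_i^\t \leq \mu$ with certainty, and whenever at least one of the $\T$ real columns has sum at most $(1-\varepsilon)\mu$, the minimum drops to at most $(1-\varepsilon)\mu$. Letting $p = 1 - (1-\delta)^\T \geq 1 - e^{-\delta \T} \geq 1/2$ be the probability of this event,
$$\expe{\min_{\t \in [\T+1]} \sum_{i=1}^n X_i^\t} \leq p(1-\varepsilon)\mu + (1-p)\mu = \mu(1 - p\varepsilon) \leq \mu\left(1 - \frac{\varepsilon}{2}\right).$$

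Combining the two bounds and rearranging gives $\ln(\T+1) \geq (\stab/2)(e^{-\stab} - 1 + \varepsilon/2)\mu$. The main step is to optimize $\stab$ correctly to get the promised constant $1/32$. Since $\varepsilon \in [0,1]$ I can restrict attention to small $\stab$ and use the second-order lower bound $e^{-\stab} \geq 1 - \stab + \stab^2/3$ (valid for $\stab \leq 1$). Plugging this in and choosing $\stab = \varepsilon/4$ yields
$$\ln(\T+1) \geq \frac{\varepsilon}{8}\left(\frac{\varepsilon}{4} + \frac{\varepsilon^2}{48}\right)\mu \geq \frac{\varepsilon^2 \mu}{32}.$$
This is the one spot where care with the Taylor expansion matters: a first-order bound $e^{-\stab} \geq 1-\stab$ would only give a cubic dependence on $\varepsilon$, not a quadratic one, so the quadratic correction term is essential.

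Finally, convert the bound on $\T$ back into a bound on $\delta$ exactly as in the additive/upper-multiplicative proofs. From $\T \leq \ln(2)/\delta + 1$ and $\T + 1 \geq e^{\varepsilon^2 \mu/32}$ we get $\delta \leq \ln(2)/(e^{\varepsilon^2\mu/32} - 2)$, and then
$$\delta \leq \min\!\left\{1, \frac{\ln 2}{e^{\varepsilon^2\mu/32} - 2}\right\} \leq (2 + \ln 2)\, e^{-\varepsilon^2\mu/32} \leq e^{1 - \varepsilon^2\mu/32},$$
using $2 + \ln 2 < e$. The only real obstacle is the constant optimization in the Taylor step; the rest of the argument is a direct adaptation of the upper-tail proof.
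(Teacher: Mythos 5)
Your proof is correct and follows exactly the route the paper intends (the paper omits the proof, stating only that it follows from Lemma \ref{lem:main-neg}; your argument is the faithful mirror image of the upper multiplicative tail bound). However, the aside about the Taylor expansion is wrong, and is worth correcting so you don't mislead yourself later. You claim the first-order bound $e^{-\stab} \geq 1 - \stab$ ``would only give a cubic dependence on $\varepsilon$,'' and that the $\stab^2/3$ correction is essential. In fact the first-order bound already delivers exactly the constant you want: with $\stab = \varepsilon/4$,
$$
\ln(\T+1) \geq \frac{\stab}{2}\Bigl(e^{-\stab} - 1 + \tfrac{\varepsilon}{2}\Bigr)\mu \geq \frac{\stab}{2}\Bigl(-\stab + \tfrac{\varepsilon}{2}\Bigr)\mu = \frac{\varepsilon/4}{2}\cdot\frac{\varepsilon}{4}\cdot\mu = \frac{\varepsilon^2\mu}{32},
$$
which is quadratic in $\varepsilon$, not cubic. (More generally, $\stab(\varepsilon/2 - \stab)$ is maximized at $\stab = \varepsilon/4$ with value $\varepsilon^2/16$, hence $\varepsilon^2\mu/32$ after the factor of $1/2$.) Your second-order term $\varepsilon^3/384$ is a harmless positive extra, so the computation is still valid, but the surrounding claim is false. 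This is the opposite of the situation in the \emph{upper} multiplicative tail, where $e^\stab - 1$ appears with a sign that does require the more careful choice $\stab = \ln(1 + \varepsilon/4)$; here the lower-tail exponential is on the favorable side and the crude linear bound suffices. Everything else---the dummy column, the $p \geq 1/2$ bound via $\delta\T \geq \ln 2$, and the final conversion through $\min\{1, \ln 2/(e^{\varepsilon^2\mu/32}-2)\} \leq (2+\ln 2)e^{-\varepsilon^2\mu/32} \leq e^{1-\varepsilon^2\mu/32}$---is correct.
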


\subsection{McDiarmid's Inequality}\label{sec:mcdiarmid}

Subgaussian tail bounds are not specific to summation. For independent random variables $X_1, \cdots, X_n$, we can prove subgaussian tail bounds on $f(X_1, \cdots, X_n)$ for any ``low-sensitivity'' function $f$. The sum --- that is, $f(x_1, \cdots, x_n) = \sum_{i=1}^n x_i$ --- is only one example of a low-sensitivity function. The more general property that we require is below.

\begin{definition}
A function $f : \mathcal{X}^n \to \mathbb{R}$ is sensitivity-$\Delta$ if, for all $i \in [n]$ and all $x_1, x_2, \cdots, x_n, x'_i \in \mathcal{X}$, we have $$\left| f(x_1, x_2, \cdots, x_n) - f(x_1, x_2, \cdots, x_{i-1}, x'_i, x_{i+1}, \cdots, x_n) \right| \leq \Delta.$$
\end{definition}

We can generalize Proposition \ref{prop:Max-intro} to low-sensitivity functions as follows.

\begin{lemma}\label{lem:sens}
Let $\matX$ be a random $n \times \T$ matrix with the entries supported on $\mathcal{X}$ (not necessarily being real numbers). Assume that the rows of $\matX$ are independent. Let $f : \mathcal{X}^n \to \mathbb{R}$ be sensitivity-$\Delta$. Then
$$
\expe{\max_{\t \in [\T]} \left( f(X_1^\t, \cdots, X_n^\t) - \expe{ f(X_1^\t, \cdots, X_n^\t)} \right)} \leq 8\Delta\sqrt{n \ln m}.
$$
\end{lemma}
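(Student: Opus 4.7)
The plan is to mimic the proof of Lemma~\ref{lem:main1}, with two modifications: (i) rescale the selection procedure by the sensitivity parameter $\Delta$, and (ii) replace the additive hybrid step by one that uses a Doob martingale decomposition of $f$ in place of linearity.

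Write $Y^\t := f(X_1^\t, \ldots, X_n^\t) - \expe{f(X_1^\t, \ldots, X_n^\t)}$, so the target is $\expe{\max_\t Y^\t}$. I would introduce the selection procedure $\prob{\sel_\stab(\matx) = \t} \propto \exp(\tfrac{\stab}{2\Delta} Y^\t(\matx))$, where $Y^\t(\matx) = f(x_1^\t, \ldots, x_n^\t) - \mu_\t$ and $\mu_\t := \expe{f(X_1^\t, \ldots, X_n^\t)}$ is a constant used only in the analysis. Since $f$ is sensitivity-$\Delta$, changing one row of $\matx$ shifts this exponent by at most $\stab/2$ for every $\t$, so the analogues of Claims~\ref{clm:emstab} and~\ref{clm:emacc} go through verbatim: $\sel_\stab$ is $e^\stab$-stable under row changes, and
\[
\ex{\sel_\stab}{Y^{\sel_\stab(\matx)}(\matx)} \geq \max_{\t \in [\T]} Y^\t(\matx) - \frac{2\Delta \ln \T}{\stab}.
\]

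The main obstacle is the analogue of Claim~\ref{clm:hybrid}: since $f$ is not a sum of per-row functions, iterating the single-row swap of Lemma~\ref{lem:main1} over all $n$ rows would compound to a factor $e^{n\stab}$, far too weak. To restore a row-by-row structure, I would decompose each $Y^\t$ via its Doob martingale: $Y^\t = \sum_{i=1}^{n} g_i^\t(X_1^\t, \ldots, X_i^\t)$ where $g_i^\t(x_1,\ldots,x_i) := \expe{f(x_1,\ldots,x_i, X_{i+1}^\t, \ldots, X_n^\t)} - \expe{f(x_1,\ldots,x_{i-1}, X_i^\t, X_{i+1}^\t, \ldots, X_n^\t)}$. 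Sensitivity of $f$ yields $|g_i^\t| \leq \Delta$, and by construction $\ex{X_i^\t}{g_i^\t(x_1, \ldots, x_{i-1}, X_i^\t)} = 0$ for every prefix.

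Letting $S = \sel_\stab(\matX)$, I would handle each term $\expe{\prob{S=\t \mid \matX}\, g_i^\t(X_1^\t, \ldots, X_i^\t)}$ via the ghost-sample trick of Claim~\ref{clm:hybrid}. Introducing an i.i.d.\ copy $\tilde\matX$ and writing $\prob{S=\t \mid \matX} = \prob{\sel_\stab(\matX_{-i},\tilde\matX_i)=\t} + \bigl(\prob{S=\t \mid \matX} - \prob{\sel_\stab(\matX_{-i},\tilde\matX_i)=\t}\bigr)$, the distributional exchange between rows $\matX_i$ and $\tilde\matX_i$ converts the first piece into $\expe{\prob{S=\t \mid \matX}\, g_i^\t(X_1^\t,\ldots,X_{i-1}^\t,\tilde X_i^\t)}$, which vanishes because $\tilde X_i^\t$ is independent of $\matX$ and $g_i^\t$ is mean-zero in its last argument. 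Claim~\ref{clm:emstab} combined with $|g_i^\t| \leq \Delta$ bounds the second piece in absolute value by $(e^\stab - 1)\Delta \cdot \prob{S=\t}$. Summing over $\t$ and $i$ gives $|\expe{Y^S}| \leq n(e^\stab - 1)\Delta$, which combined with the accuracy bound yields
\[
\expe{\max_{\t \in [\T]} Y^\t} \leq n(e^\stab - 1)\Delta + \frac{2\Delta \ln \T}{\stab}.
\]
Choosing $\stab = \sqrt{\ln\T/n}$ and using $e^\stab - 1 \leq 2\stab$ for $\stab \leq 1$ gives $O(\Delta \sqrt{n \ln \T})$, within the claimed constant~$8$; the complementary regime $\T \geq e^n$ is trivial since $|Y^\t| \leq n\Delta$ deterministically. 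The critical step is the martingale cancellation: without the mean-zero property of the Doob differences, each of the $n$ row-swaps would cost a full multiplicative factor $e^\stab$; here instead each row contributes only the additive slack $(e^\stab - 1)\Delta$ from the stability estimate.
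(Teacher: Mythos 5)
Your proof is correct. Since the paper only cites Theorem~7.2 of Bassily et al.~\cite{BassilyNSSSU16} rather than spelling out a proof of Lemma~\ref{lem:sens}, your argument supplies what the paper leaves implicit, and it is faithful to the intended technique. Rescaling the exponent by $\Delta$ makes the row-change sensitivity of the exponential weights match the bounded-sum case, so the stability and accuracy claims indeed carry over verbatim (with the extra $\Delta$ in the accuracy slack). The substantive new idea is the hybrid step: you correctly observe that directly applying the row-by-row swap to $f$ itself would compound to $e^{n\stab}$, and you repair this with the Doob martingale decomposition. The Doob differences $g_i^\t$ inherit exactly the two properties that $X_i^\t-\mu_i$ enjoys in Claim~\ref{clm:hybrid}: they are bounded by $\Delta$ (so the $(e^\stab-1)$ slack from Claim~\ref{clm:emstab} applies term by term) and mean-zero in their last argument conditioned on the prefix (so the ghost-sample exchange annihilates the leading piece). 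This is precisely the mechanism underlying the transfer theorem for low-sensitivity queries in Bassily et al., so you have reconstructed, rather than replaced, the cited argument. Your computation gives $4\Delta\sqrt{n\ln m}$, comfortably inside the stated bound of $8$; the paper's constant was evidently not optimized.
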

The proof of Lemma \ref{lem:sens} is similar to that of Theorem 7.2 of Bassily et al.~\cite{BassilyNSSSU16}.  Lemma \ref{lem:sens} allows us to generalize Theorem \ref{thm:Hoeffding-intro} to McDiarmid's inequality:

\begin{theorem}[\cite{mcdiarmid1989method}]
Let $X_1, \cdots, X_n$ be independent random variables and let $f : \mathcal{X}^n \to \mathbb{R}$ have sensitivity-$\Delta$. Then, for all $\varepsilon \geq 0$, $$\prob{f(X_1, \cdots, X_n) - \expe{f(X_1, \cdots, X_n)} \geq \varepsilon n \Delta} \leq e^{-\Omega(\varepsilon^2 n)}.$$
\end{theorem}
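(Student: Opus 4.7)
The plan is to run the pipeline of Theorem~\ref{thm:Hoeffding-later} with Lemma~\ref{lem:sens} playing the role that Proposition~\ref{prop:Max-later} played there. First I form the $n \times \T$ matrix $\matX$ whose columns are $\T$ independent copies of $(X_1,\ldots,X_n)$; its rows are automatically independent, and with $Y^\t := f(X_1^\t,\ldots,X_n^\t) - \expe{f(X_1,\ldots,X_n)}$ the $Y^\t$ are i.i.d.\ copies of $Y := f(X_1,\ldots,X_n) - \expe{f(X_1,\ldots,X_n)}$. Applying Lemma~\ref{lem:sens} directly to $f$ then yields the proxy bound $\expe{\max_{\t \in [\T]} Y^\t} \leq 8\Delta\sqrt{n \ln \T}$.

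The main obstacle is bridging this to the quantity $\expe{\max\{0, Y^1, \ldots, Y^\T\}} = \expe{(\max_{\t} Y^\t)_+}$, which is what Lemma~\ref{lem:MaxTB-intro} actually consumes. In the sum case of Proposition~\ref{prop:Max-later} the gap was eliminated by appending a constant dummy column $x_i^{\T+1} = \mu_i$ realizing $Y^{\T+1} = 0$ exactly; for a general sensitivity-$\Delta$ function $f$ no deterministic input is guaranteed to satisfy $f(x^{\T+1}) = \expe{f}$. I will close the gap instead via the identity $(\max_\t Y^\t)_+ = \max_\t Y^\t + (\max_\t Y^\t)_-$ combined with the deterministic comparison $(\max_\t Y^\t)_- \leq (Y^1)_-$: when $\max_\t Y^\t < 0$, all $Y^\t$ (in particular $Y^1$) are negative and $\max_\t Y^\t \geq Y^1$, so $-\max_\t Y^\t \leq -Y^1$. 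Taking expectations and using $\expe{Y^1}=0$ to convert $\expe{(Y^1)_-} = \tfrac{1}{2}\expe{|Y^1|}$ yields
$$
\expe{\max\{0,Y^1,\ldots,Y^\T\}} \;\leq\; 8\Delta\sqrt{n\ln \T} \;+\; \tfrac{1}{2}\expe{|Y^1|}.
$$
The residual is controlled by $\expe{|Y^1|} \leq \sqrt{\mathrm{Var}(f)} \leq \Delta\sqrt{n/2}$ via Jensen's inequality and the Efron--Stein inequality; alternatively, a second invocation of Lemma~\ref{lem:sens} with $\T = 2$ applied to $f$ and to $-f$ bounds $\expe{|Y^1 - Y^2|} = O(\Delta\sqrt{n})$, which dominates $\expe{|Y^1|}$ by Jensen.

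Combining the two pieces gives $\expe{\max\{0,Y^1,\ldots,Y^\T\}} = O(\Delta\sqrt{n\ln(\T+1)})$, so Lemma~\ref{lem:MaxTB-intro} yields $\prob{Y \geq C_1 \Delta \sqrt{n\ln \T}} \leq \ln(2)/\T$ for some absolute constant $C_1$. Setting $\T = \lfloor e^{\varepsilon^2 n / C_2}\rfloor$ with $C_2 = C_1^2$ forces this threshold below $\varepsilon n \Delta$, turning the right-hand side into $e^{-\Omega(\varepsilon^2 n)}$, exactly as required. The only substantive departure from the argument of Theorem~\ref{thm:Hoeffding-later} is the replacement of the dummy-column trick by the deterministic bound $(\max_\t Y^\t)_- \leq (Y^1)_-$; the $\expe{|Y^1|}$ correction it introduces is absorbed into the dominant $\Delta\sqrt{n\ln \T}$ term once $\T$ is chosen large enough, so constants and lower-order slack do not affect the exponential asymptotic.
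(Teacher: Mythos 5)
Your proof is correct, and the pipeline you propose --- Lemma~\ref{lem:sens} in place of Proposition~\ref{prop:Max-later}, followed by Lemma~\ref{lem:MaxTB-intro} and tuning $\T$ --- is exactly what the paper intends (the paper states but does not write out this derivation). However, the obstacle you spend most of your effort circumventing is illusory, so your bridge, while valid, is an unnecessary detour.

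You argue the dummy-column trick fails because ``no deterministic input is guaranteed to satisfy $f(x^{\T+1}) = \expe{f}$.'' That equality is not what the trick requires. Look again at what Lemma~\ref{lem:sens} centers by: each term in the max is $f(X_1^\t,\ldots,X_n^\t) - \expe{f(X_1^\t,\ldots,X_n^\t)}$, where the inner expectation is over the $\t$-th column's own distribution. So fix column $\T+1$ to be any deterministic vector $x \in \cX^n$; then $f(X^{\T+1}) = f(x)$ is a constant, hence equal to its own expectation, and $Y^{\T+1} = f(x) - \expe{f(x)} = 0$ automatically. Appending this column keeps the rows of the $n\times(\T+1)$ matrix independent, so a single application of Lemma~\ref{lem:sens} yields $\expe{\max\{0, Y^1, \ldots, Y^\T\}} \leq 8\Delta\sqrt{n\ln(\T+1)}$ directly, and your tuning of $\T$ finishes the argument.

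Your workaround is nevertheless sound: the deterministic comparison $(\max_\t Y^\t)_- \leq (Y^1)_-$ is correct, $\expe{(Y^1)_-} = \tfrac{1}{2}\expe{|Y^1|}$ follows from $\expe{Y^1}=0$, and your second invocation of Lemma~\ref{lem:sens} on $f$ and $-f$ with two columns gives $\expe{|Y^1-Y^2|}\leq 16\Delta\sqrt{n\ln 2}$ via $\max(a,b)+\max(-a,-b)=|a-b|$, which dominates $\expe{|Y^1|}$ by Jensen. But it either imports Efron--Stein from outside the paper or doubles the invocations of the key lemma, where the dummy column needs only one. The thing to internalize is that the centering in Lemma~\ref{lem:sens} is per-column, so constants are self-centering --- the trick generalizes verbatim from sums to arbitrary $f$.
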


\bibliographystyle{alpha}
\bibliography{refs}

\end{document}